\definecolor{darkgreen}{rgb}{0.0,0,0.9}
\newtcolorbox{wbox}
{
	colback  = white,
}
\newcommand*{\suppress}[1]{}
\newcommand*{\cM}{\mathcal{M}}
\newcommand*{\cR}{\mathcal{R}}
\newcommand*{\LLC}{\mathcal{L}}
\def\thm@space@setup{%
	\thm@preskip= 10pt
	\thm@postskip=\thm@preskip 
}
\renewcommand{\paragraph}{%
	\@startsection{paragraph}{4}%
	{\z@}{5pt}{-1em}%
	{\normalfont\normalsize\bfseries}%
}
\newtheorem{theorem}{Theorem}
\newtheorem{lemma}{Lemma}
\newtheorem{corollary}{Corollary}
\newtheorem{definition}{Definition}
\newtheorem{problem}{Problem}
\newtheorem{example}{Example}
\newtheorem{observation}{Observation}
\newtheorem{proposition}{Proposition}
\theoremstyle{definition}
\newenvironment{fminipage}%
{\begin{Sbox}\begin{minipage}}%
		{\end{minipage}\end{Sbox}\fbox{\TheSbox}}
\def\union{\cup}
\newcommand*{\PSch}{\mbox{\rm{Preferred-Schools}}}
\newcommand*{\PSt}{\mbox{\rm{Preferring-Students}}}
\newcommand*{\Barr}{\mbox{\rm{Barrier}}}
\newcommand*{\FSch}{\mbox{\rm{Feasible-Schools}}}
\def\lla{\leftarrow}
\newcommand*{\BStP}{\mbox{\rm{BS-Preferring}}}
\newcommand*{\LPSt}{\mbox{\rm{LPS-Assigned}}}
\newcommand{\PgSch}{\text{Schools-FBPairs}}
\title{Stability-Preserving, Time-Efficient Mechanisms for \\
School Choice in Two Rounds\footnote{Supported in part by NSF grant CCF-1815901.}}
\author[1]{Karthik Gajulapalli}
\author[2]{James Liu\footnote{Part of this work was done while the author was a graduate student at the University of California, Irvine.}}
\author[3]{Tung Mai\footnote{Part of this work was done while the author was a postdoctoral fellow at the University of California, Irvine.}}
\author[1]{Vijay V.~Vazirani}
\affil[1]{University of California, Irvine}
\affil[2]{K-Sky Limited}
\affil[3]{Adobe Research}
\date{}
\begin{document}
	\maketitle
	
	\begin{abstract}
We address the following dynamic version of the school choice question: a city, named City, admits students in two temporally-separated rounds, denoted $\cR_1$ and $\cR_2$. In round $\cR_1$, the capacity of each school is fixed and mechanism $\cM_1$ finds a student optimal stable matching. In round $\cR_2$, certain parameters change, e.g., new students move into the City or the City is happy to allocate extra seats to specific schools. We study a number of Settings of this kind and give polynomial time algorithms for obtaining a stable matching for the new situations. 

It is well established that switching the school of a student midway, unsynchronized with her classmates, can cause traumatic effects. This fact guides us to two types of results, the first simply disallows any re-allocations in round $\cR_2$, and the second asks for a stable matching that minimizes the number of re-allocations. For the latter, we prove that the stable matchings which minimize the number of re-allocations form a sublattice of the lattice of stable matchings. Observations about incentive compatibility are woven into these results. We also give a third type of results, namely proofs of NP-hardness for a mechanism for round $\cR_2$ under certain settings. 
\end{abstract}

\section{Introduction}
\label{sec:intro}
School choice is among the most consequential events in a child's upbringing, whether it is admission to elementary, middle or high school, and hence has been accorded its due importance not only in the education literature but also in game theory and economics. In order to deal with the flaws in the practices of the day, the seminal paper of Abdulkadiroglu and Sonmez \cite{Atila-Sonmez} formulated this as a mechanism design problem. This approach has been enormously successful, especially in large cities involving the admission of tens of thousands of students into hundreds of schools, e.g., see \cite{NYC, New-Orleans, abdulkadiroglu2013matching, Pathak}, and today occupies a key place in the area of market design in economics, e.g., see \cite{Nobel, roth2008, Roth, Simons}. 

Once the basic game-theoretic issues in school choice were adequately addressed, researchers turned attention to the next level of questions. In this vein, in a recent paper, Feigenbaum et. al. \cite{Feigenbaum} remarked, ``However, most models considered in this literature are essentially static. Incorporating dynamic considerations in designing assignment mechanisms ... is an important aspect that has only recently started to be addressed.'' 

Our paper deals with precisely this. We define several settings for school choice in which an instance is made available in the first round $\cR_1$ and at a later time, in the second round $\cR_2$, some of the parameters change. Each setting asks for a pair of mechanisms, $(\cM_1, \ \cM_2)$ for finding matchings of students to schools in these two rounds. All our settings insist that the matchings found in both rounds are stable. It will be convenient to classify our results into three types. In Type A and B, both mechanisms $\cM_1$ and $\cM_2$ are required to run in polynomial time.
\begin{enumerate}
	\item {\bf Type A:} Mechanism $\cM_2$ is disallowed to reassign the school of any student matched by $\cM_1$. We present two settings, {\bf A1} and {\bf A2}. 
	\item {\bf Type B:} Mechanism $\cM_2$ is allowed to reassign the school of students matched by $\cM_1$; however, it needs to (provably) minimize the number of such reassignments. We present two settings, {\bf B1} and {\bf B2}.
	\item {\bf Type C:} These are NP-hardness results -- of mechanism $\cM_2$ for four problems and of a fifth problem, which involves only one round.
\end{enumerate}

\subsection{Our model and its justification}
\label{sec.model}

Our solutions to Type A and B results will strictly adhere to the following tenets; we justify them below.

\begin{enumerate}
	\item {\bf Tenet 1:}  All matchings produced by our mechanisms need to be stable.
	\item {\bf Tenet 2:}  In Type A results, mechanism $\cM_2$ is disallowed to reassign the school of any student matched by $\cM_1$, and in Type B results, $\cM_2$ must provably minimize the number of such reassignments.
	\item {\bf Tenet 3:} We want all our mechanisms to run in polynomial time.
\end{enumerate}

The use of the classic Gale-Shapley \cite{GaleS} Deferred Acceptance Algorithm has emerged as a method of choice in the literature. Our mechanisms also use this algorithm. Stability comes with key advantages: First, no student and school, who are not matched to each other, will have the incentive to go outside the mechanism to strike a deal. Second, it eliminates {\em justified envy}, i.e., the following situation cannot arise: there is a student $s_i$ who prefers another student $s_j$'s school assignment, say $h_k$, while being fully aware that $h_k$ preferred her to $s_j$.

Switching the school of a student midway, unsynchronized with her classmates -- such as when the entire class moves from elementary to middle or from middle to high school -- is well-known to cause traumatic effects, e.g., see \cite{switching}. It is for these reasons that in Type A results, mechanism $\cM_2$ is disallowed to reassign the school of any student matched by $\cM_1$ and in Type B results, $\cM_2$ must provably minimize the number of such reassignments. For Type A results, we say that $\cM_2$ {\em extends} $M$ to a stable matching $M'$. For Type B results, we say that $\cM_2 $ computes a {\em minimum stable re-allocation} $M'$ of $M$.

The strongest notion of incentive compatibility for a mechanism is {\em dominant strategy incentive compatible (DSIC)}, for students. This entails that regardless of the preferences reported by other students, a student can do no better than report her true preference list, i.e., truth-telling is a dominant strategy for all students. This immediately simplifies the task of students and their parents, since they don't need to waste any effort trying to game the system. Furthermore, if students are forced to adjust their choices in an attempt to gain a better matching, the mechanism, dealing with choices reported to it, may be forced to make matches that are suboptimal for students as well as schools. 

Gale and Shapley \cite{GaleS} proved that if the Differed Acceptance Algorithm is run with students proposing, it will yield a {\em student-optimal matching}, i.e., each student will get the best possible school, according to her preference list, among all stable matchings. However, this matching may be extremely unfavorable to an individual student --  she may be matched to a school which is very low on her preference list, giving her incentive to cheat, i.e., provide a false preference list, in order to get a better matching. Almost two decades after the Gale-Shapley result, Dubins and Freedman \cite{Dubins1981machiavelli} proved, via a highly non-trivial analysis, that this algorithm is DSIC for students. This ground-breaking result opened up the Gale-Shapley algorithm to a host of highly consequential applications, including school choice.

It is worth noting that Dubins and Freedman proved incentive compatibility for {\em precisely} the algorithm given by Gale and Shapley twenty years earlier. The latter did not even mention incentive compatibility in their paper and it is safe to assume that they did not design their algorithm around this fact. Since polynomial time solvability exploits the underlying structure of the problem in such a profound manner, it is reasonable to assume that it bestows the resulting algorithm with desirable properties\footnote{This fact is related to the notion of ``algorithmic way of thinking'' or the ``computational lens'', which is believed to yield progress of a {\em fundamental nature} in the other sciences.}, such as DSIC. 

In all of our results of Type A and B, mechanism $\cM_1$ finds a student-optimal stable matching using the Gale-Shapley Differed Acceptance Algorithm and is therefore DSIC for students. For Setting B2 we provide a mechanism for round $\cR_2$ that is DSIC. However, our mechanisms for round $\cR_2$ for the remaining three settings do not achieve this. Our main open problem is to fix this. For completeness, and in order to motivate this open problem, we provide counter-examples for each of these setting in Section $\ref{sec:incentives}$.

It is well known that the set of Stable Matchings of a given instance forms a finite distributive lattice \cite{GusfieldI}. By orienting the underlying partial order of this lattice appropriately, the student-optimal stable matching can be made the top element of this lattice and the school optimal matching the bottom element. For both Settings of Type $B$, we show that the set of minimum stable re-allocations form a sublattice of this lattice. We provide  polynomial-time mechanisms for computing the top and bottom elements of this sublattice. For Setting B1, we show that the top of the sublattice is also the top of the whole lattice, i.e., it is the student-optimal stable matching; this is crucial for showing DSIC for B1.

\subsubsection{Type A and B settings}
\label{sec.settings}

In all four settings, we give a pair of mechanisms $(\cM_1, \cM_2)$ which run in polynomial time. The four settings involve the admission of students of a city, named City, into schools; the preference lists of both students and schools are provided to the mechanisms.  $\cM_1$ computes a student-optimal stable matching, $M$, over all the participants in $\cR_1$. In $\cR_2$ some of the parameters over which $M$ was defined are updated. $\cM_2$ then modifies the matching $M$ to produce a new matching $M'$ that is stable over the new parameters defined in $\cR_2$. 

For Settings of Type A, in round $\cR_1$, the capacity of each school is fixed but in round $\cR_2$, the City is happy to allocate extra seats to specific schools per the recommendation of mechanism $\cM_2$, which in turn has to meet specified requirements imposed by the City.  Let $L$ be the set of {\em left-over students}, those who could not be admitted in round $\cR_1$. 

In round $\cR_2$ of Setting A1, the problem is to maximize the number of students admitted from $L$, by extending $M$ in a stability-preserving manner. In Setting A2, a set $N$ of {\em new students} also arrive from other cities and their preference lists are revealed to $\cM_2$. The requirement now is to admit as few students as possible from $N$ and subject to that, as many as possible from $L$, again in a stability-preserving manner. Next, we consider a slightly different problem within Setting A2, namely find the largest subset of $(N \cup L)$ that can be matched in a stability-preserving manner. We give an efficient mechanism for this as well. Finally, we give a procedure that outputs all possible stability-preserving extensions of a given stable matching (which may be exponentially many) with polynomial delay.

For Settings of Type B, the capacity of each school is fixed in $\cR_1$, but in $\cR_2$ the City has to deal with the arrival of new students and new schools. This could lead the matching found by $\cM_1$ to no longer be stable. 

In round $\cR_2$ of Setting B1, a set $N$ of new students arrive and their preference lists are revealed to $\cM_2$. The capacity of schools remain unchanged and the problem is to find a matching, $M'$ that is stable under the arrival of new students which minimizes the number of students who are assigned to a different school in $M'$. In Setting B2, a set $H'$ of new schools arrive and the City allows the capacities of the original schools to increase. The preference lists of the students are updated to reflect these new schools, we again require that $\cM_2$ compute a new stable matching, $M'$ over the updated preference lists that minimizes the number of students who get matched to a different school in $M'$.

 \subsection{Related work}
\label{sec.related}

Besides the references pointed out above on school choice, in this section, we will concentrate on 
recent work on dynamic matching markets, especially those pertaining to school choice. Feigenbaum et. al. \cite{Feigenbaum} study the following issue that arises in NYC public high schools, which admits over 80,000 students annually: after the initial centralized allocation, about 10\% of the students choose not attend the school allocated to them, instead going to private or charter schools. To deal with this, \cite{Feigenbaum} give a two-round solution which maintains truthfulness and efficiency and minimizes the movement of students between schools.

An interesting phenomena that has been observed in matching markets is {\em unraveling}, under which matches are made early to beat the competition, even though it leads to inefficiencies due to unavailability of full information. A classic case, indeed one that motivated the formation of centralized clearing houses, is that of the market for medical interns in which contracts for interns were signed two years before the future interns would even graduate \cite{stability}. A theoretical explanation of this phenomena was recently provided by \cite{echenique2016strategic}. 


\cite{kadam2018multiperiod} point out that stable pairings may not necessarily last forever, e.g., a student may switch from private to public school or a married couple may divorce. They study  dynamic, multi-period, bilateral matching markets and they define and identify sufficient conditions for the existence of a dynamically stable matching.

\cite{doval} develops a notion of stability that applies in markets where matching opportunities arrive over time, much like the seats in our work. One of the things shown in this paper is that agents' incentive to wait for better matching opportunities can make achieving stability very difficult. Indeed, the notion of dynamic stability given in this paper is a necessary condition which a matching must satisfy in order that agents do not to find it profitable to game a mechanism by showing up in later rounds. 

A number of recent papers \cite{westkamp,dogan,andersson,dur,haeringer} consider the consequences of having a mechanism that repeats the Gale-Shapley Deferred Acceptance algorithm multiple times, similar to our work. Note that Deferred Acceptance is not {\em consistent} in that if one runs it, then removes some agents and their assignments, and runs it again on the remaining agents, one does not obtain the same assignment restricted to the left-over agents. In these papers, the authors show that there is room for manipulation by submitting empty lists in the first round. However, unlike our model in which changes are introduced in round $\cR_2$, in all these papers, there is nothing that motivates running Deferred Acceptance twice, namely no arrivals of new students, no change in capacities, no changes in preferences, etc. 



	\subsection{Overview of structural and algorithmic ideas}
\label{sec.overview}

The main idea for obtaining a stability-preserving mechanism in round $\cR_2$ for Settings $A1$ and $A2$ lies in the notion of a {\em barrier} which ensures that students admitted in $\cR_2$ do not form blocking pairs. A crucial issue is to place barriers optimally to ensure that the number of students admitted is optimized (minimized or maximized) appropriately. 

The algorithm for enumerating stable extensions of a stable matching, given in Section~\ref{sec.enumerate}, relies heavily on the fundamental 
structural property of stable matchings given in Lemma~\ref{lem:blocking}. 
Enumerated matchings are extended by only one student in an iteration.
At each step, the algorithm finds all such feasible extensions by one student 
in a way such that 
there must be at least one feasible assignment, for any student, at each step. 
This assurance is crucial in guaranteeing that the delay between any two enumerated matchings is polynomial. 

For Settings B1 and B2, the mechanism proceeds by iteratively resolving blocking pairs. 
Structurally, we  show that the set of all minimum stable re-allocations forms a sublattice of the stable matching lattice. The core of this analysis relies on the fact that the set of students who are assigned to a different school in round $\cR_2$ cannot be matched to their original school in any minimum stable re-allocation. This lets us divide the set of students into two groups, students matched to the same school (fixed students) in all minimum stable re-allocations, and students matched to different schools (moving students). We then construct a smaller stable matching instance, $I$, over the set of moving students. By appropriately placing barriers for each student and school in $I$ we can ensure that the union of any stable matching in $I$ and the matching restricted to the fixed students will also be stable. This stable matching is a minimum stable re-allocation and defines a bijection between the set of minimum stable re-allocations and set of stable matchings in $I$, we exploit the lattice structure of the latter.

	\section{Preliminaries}

\subsection{The stable matching problem for school choice}
\label{sec:stable}

The stable matching problem takes as input a set $H = \{h_1, h_2, \ldots , h_m\}$ of $m$ public schools and a set $S =\{s_1, s_2, \ldots , s_n\}$ of $n$ students who are seeking admission to the schools. Each school $h_j \in H$ has an integer-valued {\em capacity}, $c(j)$, stating the maximum number of students that can be assigned to it. If $h_j$ is assigned $c(j)$ students, we will say that $h_j$ is {\em filled}, and otherwise it is {\em under-filled}. 

Each student $s_i \in S$ has a strict and complete preference list, $l(s_i)$, over $H \cup \{\emptyset\}$. If $s_i$ prefers $\emptyset$ to $h_j$, then she prefers remaining unassigned rather being assigned to school $h_j$. We will assume that the list $l(s_i)$ is ordered by decreasing preferences. Therefore, if $s_i$ prefers $h_j$ to $h_k$, 
we can equivalently say that $h_j$ appears {\em before} $h_k$ or $h_k$ appears {\em after} $h_j$ on $s_i$'s preference list. Clearly, the order among the schools occurring after $\emptyset$ on $s_i$'s list is immaterial, since $s_i$ prefers remaining unassigned rather than being assigned to any one of them. Similarly, each school $h_j \in H$ has a strict and complete preference list, $l(h_j)$, over $S \cup \{\emptyset\}$. Once again, for each student $s_i$ occurring after $\emptyset$, $h_j$ prefers remaining under-filled rather than admitting $s_i$, and the order among these students is of no consequence.

Given a set of schools, $H' \subseteq H$, by the {\em best school for $s_i$ in $H'$} we mean the school that $s_i$ prefers the most among the schools in $H'$. Similarly, given a set of students, $S' \subseteq S$, by the {\em best student for $h_j$ in $S'$} we mean the student whom $h_j$ prefers the most among the students in $S'$.

A {\em matching} $M$ is a function, $M: S \rightarrow H \cup \{\emptyset\}$ such that if $M(s_i) = h_j$ then it must be the case that $s_i$ prefers $h_j$ to $\emptyset$ and $h_j$ prefers $s_i$ to $\emptyset$; if so, we say that student $s_i$ is {\em assigned to} school $h_j$. If $M(s_i) = \emptyset$, then $s_i$ is not assigned to any school. The matching $M$ also has to ensure that the number of students assigned to each school $h_j$ is at most $c(j)$. 

For a matching $M$, a student-school pair $(s_i, h_j)$ is said to be a \emph{blocking pair} if $s_i$ is not assigned to $h_j$, $s_i$ prefers $h_j$ to $M(s_i)$ and one of the following conditions holds:
\begin{enumerate}
	\item $h_j$ prefers $s_i$ to one of the students assigned to $h_j$, or
	\item $h_j$ is under-filled and $h_j$ prefers $s_i$ to $\emptyset$.
\end{enumerate}
The blocking pair is said to be {\em type 1} ({\em type 2})  if the first (second) condition holds.
A matching $M$ is said to be \emph{stable} if there is no blocking pair for it.

\begin{theorem} \textbf{Rural Hospitals Theorem} \cite{rural_hospital}
\begin{enumerate}
	\item Over all the stable matchings of the given instance: the set of matched students is the same and the number of students matched to each school is also the same.

	\item Assume that school $h$ is not matched to capacity in a stable matching. Then, the set of students matched to $h$ is the same over all stable matchings.

\end{enumerate}

\end{theorem}

\subsection{The Stable Matching Lattice}

Let $S$, be a finite set of elements and $\succeq$ be a partial order over the elements in $S$. Given two elements $a,b \in S$, we define an upperbound of $a,b$ as an element $u \in S$ such that $u \succeq a$, and $u \succeq b$. Similarly we define a lowerbound of $a,b$ as an element $l \in S$ such that $a \succeq l$, and $b \succeq l$. 
An element $u \in S$ is considered the least upperbound of $(a,b)$:\
	\begin{itemize}
		\item $u$ is an upperbound of $(a,b)$
		\item for all upperbounds $u' \in S$, $u' \succeq u$
	\end{itemize}

\noindent Greatest lowerbound can be defined similarly.

\begin{definition}  A Lattice $\LLC = (S, \succeq)$, is defined over a finite set $S$, and a partial order $\succeq$, if for every pair of elements $a,b \in S$, there exists a unique least upperbound and a unique greatest lowerbound. We call the least upperbound the join of $a$ and $b$ and denote it by $a \vee b$, and anagolously call the least lowerbound the meet of $a$ and $b$ and denote it by $a \wedge b$ 

\end{definition}

\begin{definition}
	Let $SM$ denote the set of stable matchings over given instance $(S, H, c)$, then for two stable matchings $M, M' \subseteq SM$, $M \succeq M'$, if and only if $ \ \forall$ $s_i \in S, \ s_i $ weakly prefers $M(s_i)$ to $M'(s_i)$ 

\end{definition}

\noindent Given two stable matchings $M$ and $M'$ consider two new maps $M_U$ and $M_L$, defined as follows:
\begin{itemize}
	\item$M_U(s_i) = max(M(s_i), M'(s_i))$
	\item$M_L(s_i) = min(M(s_i), M'(s_i))$
\end{itemize}
where max is the partner $s_i$ weakly prefers between $M$ and $M'$, and min is the complement of max.\\

\begin{theorem}(\cite{GusfieldI}).\label{sec:stable-matching-lattice} 
	The set of stable matchings $(SM,\succeq)$ characterizes a finite distributive lattice. Morever $M_L$, $M_U$ represent the meet and join of any two stable matchings in the lattice.\\
\end{theorem}


\section{Our Results for the Four Settings}
\label{sec:prob}

In round $\cR_1$, the setup defined in Section \ref{sec:stable} prevails and $\cM_1$ simply computes the student-optimal stable matching respecting the capacity of each school, namely $c(j)$ for $h_j$. Let this matching be denoted by $M$, $S_M \subseteq S$ be the set of students assigned to schools by $M$ and $L = (S -S_M)$ be the set of {\em left-over students}. As shown in \cite{Machiavelli}, $\cM_1$ is DSIC for students. 

For Settings of Type A, in round $\cR_2$, 
the City has decided to extend matching $M$ in a stable manner without any restrictions on extra capacity added to each school. Let us denote this matching by $M'$ and let $c'(j)$ be the number of students matched to school $h_j$, equivalently the {\em round $\cR_2$  capacity of school $h_j$}, for $h_j \in H$. Once we obtain the solution under this assumption, we will show how it can be modified in case the City can only add a restricted number of extra seats to each school.


For Settings of Type B, in round $\cR_2$ a change is made to the sets of participants, which may cause $M$ to no longer be a valid or stable matching. $\cM_2$ then updates $M$ to $M'$ in order to ensure a stable matching. By allowing {\em updates}, we let some students in $M$ get unmatched in $M'$, or get matched to different schools.  The City would like to minimize the number of students who would have to change schools, or no longer be matched to a school, in going from $M$ to $M'$. We call $M'$ a \textbf{minimum stable re-allocation} of $M$. Formally, M' is a minimum stable re-allocation of $M$ if $M'$ is a stable matching over all participants and the number of students $s_i \in S_M$ where $M(s_i) \neq M'(s_i)$ is minimized.





\begin{lemma}
	For some student $s_i$, let $M(s_i) = h_j$. Then for any student $s_k \in L$, $s_k$ appears after $s_i$ in $l(h_j)$.
\end{lemma}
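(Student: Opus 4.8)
The plan is to argue by contradiction using the blocking-pair characterization of stability, exploiting that $M$ is stable. Suppose the conclusion fails, so there is a left-over student $s_k \in L$ that appears \emph{before} $s_i$ in $l(h_j)$; that is, $h_j$ prefers $s_k$ to $s_i$. I will show that $(s_k, h_j)$ is then a blocking pair of type 1, which contradicts the stability of $M$ and forces $s_k$ to appear after $s_i$ as claimed.

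First I would record the routine ``school-side'' facts. Since $s_k \in L = S \setminus S_M$, we have $M(s_k) = \emptyset$, so $s_k$ is assigned to no school, and in particular $s_k$ is not assigned to $h_j$. Since $M(s_i) = h_j$, the definition of a matching forces $h_j$ to prefer $s_i$ to $\emptyset$; combined with the assumed $h_j$-preference of $s_k$ over $s_i$, transitivity of the strict list $l(h_j)$ gives that $h_j$ prefers $s_k$ to $\emptyset$ as well. Thus $s_i$ is a student assigned to $h_j$ whom $h_j$ ranks below $s_k$, which is exactly the school-side requirement (condition~1) for $(s_k,h_j)$ to be a type~1 block.

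The step that needs the most care is the ``student-side'' requirement: to invoke the blocking-pair definition I must also show $s_k$ prefers $h_j$ to $M(s_k) = \emptyset$. This is where the nature of a left-over student enters — $s_k$ is a student seeking admission who remained unassigned, so she prefers being placed at $h_j$ to staying unmatched; granting this, $s_k$ is unassigned, prefers $h_j$ to $M(s_k)$, and $h_j$ prefers $s_k$ to its assigned student $s_i$, so $(s_k,h_j)$ blocks $M$ and we reach the contradiction. I expect the \emph{only} genuine obstacle to be ruling out the degenerate case in which $h_j$ would gladly take $s_k$ but $s_k$ herself prefers $\emptyset$ to $h_j$: there $(s_k,h_j)$ is not a block and stability says nothing, so the statement must be read for left-over students $s_k$ that could actually be matched to $h_j$ (equivalently, who prefer $h_j$ to remaining unassigned). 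For such students the contradiction above goes through verbatim, and the whole argument is a direct unfolding of the definitions of matching, blocking pair, and stability, requiring no computation.
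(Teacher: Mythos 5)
Your proof is correct and is essentially the paper's own argument: assume $s_k$ appears before $s_i$ on $l(h_j)$ and exhibit $(s_k,h_j)$ as a type-1 blocking pair, contradicting the stability of $M$. The caveat you flag is genuine --- if $s_k$ prefers $\emptyset$ to $h_j$ then $(s_k,h_j)$ does not block and the lemma as literally stated can fail --- and the paper's one-line proof silently skips this, so the statement should indeed be read as restricted to left-over students who prefer $h_j$ to remaining unassigned, which is the only case in which the lemma is subsequently used.
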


\begin{proof}
	If $s_k$ were to appear before $s_i$ in $l(h_j)$, then $(s_k, h_j)$ will form a blocking pair for $M$, contradicting its stability.
\end{proof}

\subsection{Setting A1}
\label{sec:S1}

In this setting, in round $\cR_2$, the City wants to admit as many students from $L$ as possible in a stablity-preserving manner. We will call this problem $\textsc{$Max_L$}$. We will prove the following:

\begin{theorem}
\label{thm:S1}
	There is a polynomial time mechanism $\cM_2$ that extends matching $M$ to $M'$ so that $M'$ is stable w.r.t. students $S$ and schools $H$. Furthermore, $\cM_2$ yields the largest matching that can be obtained by a mechanism satisfying the stated conditions.
\end{theorem}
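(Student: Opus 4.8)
<br />

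The plan is to prove Theorem~\ref{thm:S1} in two parts: first construct a mechanism $\cM_2$ that extends $M$ to a stable matching $M'$ admitting the maximum possible number of students from $L$, and second argue that this mechanism is DSIC for students. The central structural device will be the notion of a \emph{barrier} alluded to in the overview: for each school $h_j$, we designate a threshold position in $l(h_j)$ such that any student admitted to $h_j$ in round $\cR_2$ must appear \emph{before} that barrier. The role of the barrier is exactly to forestall the two types of blocking pairs from the preliminaries. Note that by the unnumbered lemma proved just above the statement, every left-over student $s_k \in L$ already appears \emph{after} every round-$\cR_1$ match $s_i = M^{-1}(h_j)$ in $l(h_j)$; so adding students from $L$ to a filled school $h_j$ can never create a type-1 blocking pair against its existing matches. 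The subtlety is that once we admit some students from $L$, \emph{they} become residents of $h_j$ and could themselves be the weaker endpoint of a new type-1 blocking pair with another, better-ranked student in $L$ who was \emph{not} admitted to $h_j$ (because she preferred and got into a different school, or was blocked elsewhere). The barrier must be placed to rule this out.

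Concretely, I would set up the extension as a flow/matching feasibility problem. Treat each school $h_j$ with its augmented capacity as having unbounded extra seats beyond $c(j)$, create the bipartite graph on $L$ and $H$ where $s_k$ may be connected to $h_j$ only if $s_k$ prefers $h_j$ to $\emptyset$ and $h_j$ prefers $s_k$ to $\emptyset$, and then iteratively compute, for each school, the barrier position, which I expect to define as the rank of the least-preferred student that $h_j$ is forced to accept under stability. The key claim to establish is a monotonicity/lattice statement: the set of stable extensions of $M$ forms a structure in which there is a unique \emph{student-pessimal} barrier configuration that simultaneously maximizes total $\cR_2$ admissions from $L$. I would prove maximality by a standard exchange/augmenting-path argument: given any stable extension $M''$ admitting more students than $\cM_2$'s output $M'$, locate a school where $M''$ admits a student below $\cM_2$'s barrier, and show this student together with that school (or a chain of such) forms a blocking pair in $M''$, contradicting its stability. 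This establishes that no stability-preserving mechanism can beat $\cM_2$.

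For the DSIC half, the guiding principle stated in the overview is that \emph{the best school a student can be matched to is independent of her own reported preference list}. I would make this precise by showing that the barrier of each school, and hence the set of schools $\{h_j : s_i \text{ lies before the barrier of } h_j\}$ available to $s_i$, is determined entirely by the preference lists of the \emph{other} students and of the schools, not by $l(s_i)$ itself. Given that invariant, a student truncating, permuting, or lying about her list can only change \emph{which} available school she lands in, never \emph{expand} the set of available schools; hence she can do no better than to report truthfully and take the most-preferred available school, which is exactly what the student-optimal extension gives her. The technical content is verifying that $s_i$'s report does not shift any barrier she cares about; I would handle this by arguing that $s_i$'s presence affects a school $h_j$'s barrier only when $s_i$ is admitted to $h_j$, and in that case the effect is to possibly \emph{lower} the barrier for students ranked below $s_i$, never to raise the barrier above $s_i$'s own position.

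I expect the main obstacle to be the maximality proof, specifically the simultaneous placement of all barriers. The difficulty is that the barriers are interdependent: whether school $h_j$ is forced to admit a given student depends on where that student would otherwise go, which depends on the barriers of \emph{other} schools, so a naive greedy placement may either over-admit (violating stability via the induced type-1 blocking pairs among newly admitted students) or under-admit (violating maximality). I anticipate resolving this with a fixed-point or iterative-refinement argument --- repeatedly computing a tentative student-optimal extension, detecting the resulting blocking pairs, tightening the offending barriers, and showing the process is monotone and terminates in polynomial time --- together with a clean lemma characterizing exactly which $(s_k, h_j)$ pairs a stable extension is compelled to include. Pinning down that characterization crisply is what makes both the maximality and the DSIC arguments go through.
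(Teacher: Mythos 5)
Your proposal has the right skeleton --- per-school barriers, assignment of each left-over student to her best school among those where she clears the barrier, and DSIC via independence of the barriers from her own reported list --- but it is missing the one definition that makes the whole argument work, and in its place you propose an iterative fixed-point scheme that you do not carry out and that is not needed. The paper sets $\Barr(h_j) = \BStP(h_j)$: the student whom $h_j$ most prefers among the \emph{round-$\cR_1$-matched} students who prefer $h_j$ to their round-$\cR_1$ assignment (and $\emptyset$ if $h_j$ is under-filled). This is computed once from $M$ alone; there is no interdependence among schools and nothing to iterate. Your stated worry --- that a newly admitted student could become the weak endpoint of a type-1 blocking pair with a better-ranked student of $L$ who was not admitted to $h_j$ --- is dispelled not by tightening barriers but by the assignment rule itself: if that other student clears $\Barr(h_j)$, then $h_j \in \FSch$ of that student and she is assigned to $h_j$ or to a school she prefers, so she does not envy $h_j$; if she does not clear $\Barr(h_j)$, then $h_j$ ranks every admitted student above her, so she is not ``better-ranked.'' Together with Lemma~\ref{lem:blocking} this gives stability in a few lines.

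Two further points. First, your candidate definition of the barrier as ``the least-preferred student that $h_j$ is forced to accept under stability'' is the wrong notion for this setting: since $M$ is trivially a stable extension of itself, no school is forced to accept anyone in Setting~I (forced admissions only arise in Setting~II via the new students $N$), so that definition is vacuous here. Second, maximality needs no exchange or augmenting-path argument: if any stable extension assigns some $s_k$ appearing after $\Barr(h_j)$ to $h_j$, then $(\Barr(h_j), h_j)$ is immediately a blocking pair, because $\Barr(h_j)$ is a round-$\cR_1$ student whose assignment is frozen and who still prefers $h_j$. Hence every stable extension admits only students from $L'$, and $\cM_2$ admits all of $L'$. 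Your DSIC paragraph is essentially the paper's argument and goes through once the barrier is defined correctly, since $\BStP(h_j)$ depends only on the lists of the schools and of the students matched in round $\cR_1$.
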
 

Let $k$ be the maximum number of students that can be added from $L$, as per Theorem \ref{thm:S1}. Next, suppose that the City can only afford to add $k' < k$ extra seats. We show in Section \ref{sec:mS1} how this can be achieved while maintaining all the properties stated in Theorem \ref{thm:S1}.

\subsection{Setting A2}
\label{sec:S2}

In this setting, in round $\cR_2$, in addition to the leftover set $L$, a set $N$ of {\em new students} arrive from other cities and their preference lists are revealed to mechanism $\cM_2$. Additionally, the schools also update their preference lists to include the new students. In this setting, the City wants to give preference to students who were not matched in round $\cR_1$, i.e., $L$, over the new students, $N$. Thus it seeks the subset of $N$ that {\em must} be admitted to avoid blocking pairs and subject to that, maximize the subset of $L$ that can be added, again in a stability-preserving manner. We will call this problem $\textsc{$Min_N Max_L$}$. We will prove the following:

\begin{theorem}
\label{thm:S2}
	There is a polynomial time mechanism $\cM_2$ that accomplishes the following:
	\begin{enumerate}
		\item It finds smallest subset $N' \subseteq N$ with which the current matching $M$ needs to be extended in a stability-preserving manner. 
		\item 
		Subject to the previous extension, it finds the largest subset $L' \subseteq L$ with which the matching can be extended further in a stability-preserving manner. 
	\end{enumerate}
\end{theorem}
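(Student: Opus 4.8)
The plan is to reduce everything to one structural observation: since the City may add seats freely in round $\cR_2$, we may assume each school $h_j$ is exactly filled (setting $c'(j)$ to the number of students assigned to it), so the only blocking pairs of $M'$ are of type~1. Consequently $M'$ is stable iff for every school $h_j$ no student whom $h_j$ prefers to its \emph{worst} admitted student strictly prefers $h_j$ to her own assignment. I attach to each $h_j$ a \emph{barrier}: the earliest student in $l(h_j)$ who prefers $h_j$ to her $M'$-assignment but is not assigned to $h_j$. Stability then says exactly that the worst student admitted to $h_j$ sits strictly above this barrier. The whole proof is about locating barriers so that first $|N'|$ is minimized and then $|L'|$ is maximized, and about reading off DSIC from the fact that a student's barrier position at each school does not depend on her own reported list.

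First I treat $N$. The key monotonicity fact is that admitting a student $s$ to a school that already prefers $s$ to its worst matched student never lowers that school's barrier ($s$ sits above the old worst), so admitting forced students can never un-block another school. I define $N'$ to be exactly the set of $s\in N$ forming a blocking pair with $M$, and send each such $s$ to the best school on her list that prefers her to its worst matched student. Two claims must be checked: \emph{necessity} --- every stability-preserving extension must admit each $s\in N'$, since if $s$ is left out, the pair witnessing that $s$ blocks $M$ survives (the worst student of any school only worsens under an extension); and \emph{sufficiency} --- in $M_1$, the matching $M$ together with the admitted students of $N'$, no barrier has moved, so neither the remaining new students, nor the left-over students (all below every round-$\cR_1$ student by the Lemma), nor the round-$\cR_1$ students acquire a blocking pair. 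Minimality and uniqueness of $N'$ follow because each member is individually forced and the simultaneous placement is order-independent.

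Next, holding $M_1$ fixed, I maximize $L'$. By the Lemma every left-over student lies below every round-$\cR_1$ student on each school's list, so admitting an $L$-student to $h_j$ makes her its new worst and pushes the barrier down. The barrier I must respect at $h_j$ is the earliest student in $l(h_j)$ who prefers $h_j$ and is \emph{immovable}, namely the best among the round-$\cR_1$ students and the students of $N\setminus N'$ preferring $h_j$; crucially this barrier is independent of the $L$-students, since any $L$-student preferring $h_j$ is either admitted there (or to something she prefers) or already lies below it. Calling $s\in L$ \emph{admissible} to $h_j$ when $h_j$ prefers $s$ to this barrier, I run student-proposing deferred acceptance among the $L$-students over their admissible acceptable schools with unbounded capacities, matching each to the best such school. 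I then verify that (a) $M'$ is stable --- the worst admitted $L$-student at $h_j$ stays above the barrier so no immovable student blocks, and no admitted $L$-student prefers a school that would take her; and (b) $M'$ is \emph{maximum} --- a student with empty admissible acceptable set cannot be placed anywhere without pushing some barrier past an immovable student, hence is unmatchable.

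Finally, DSIC. The point, matching the ``best attainable school is independent of the reported list'' idea, is that for a fixed student $s$ every barrier relevant to her is determined by the other agents and the schools' (truthful) lists, never by $s$'s own report: $N'$ depends only on $M$ and $N$, and the $L$-phase barriers are set by non-$L$-students. Hence the set of schools to which $s$ can be admitted is fixed, truthful reporting secures its most-preferred element, and misreporting can only drop a good school or claim an unacceptable one. I expect the genuinely delicate step to be exactly this DSIC argument --- in particular ruling out cross-round manipulation, where a round-$\cR_1$ student deliberately enters $L$ hoping the lowered round-$\cR_2$ barriers give her a better seat; here I would invoke that her round-$\cR_1$ assignment is student-optimal and that as an $L$-student she is pinned below every round-$\cR_1$ student on each list, so no admissible school can improve on it. Establishing the two necessity directions (minimality of $N'$, maximality of $L'$) together with this manipulation-freeness is where the real care lies; stability and the polynomial running time are then routine.
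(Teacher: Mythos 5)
Your skeleton is the paper's: the same forced set $N'$ (the students of $N$ that block $M$), the same two-part barrier for the $L$-phase (best round-$\cR_1$ student preferring $h_j$, combined with the best student of $N\setminus N'$), assignment of each admitted student to her best school above the barrier, and DSIC read off from the barriers being independent of one's own report. The genuine gap is in where you place the students of $N'$, and it is precisely the pitfall the paper flags and designs around. You send $s\in N'$ to her best school among those preferring her to their worst matched student (the set the paper calls $T(s)$), while your $L$-phase barrier at each school ranges only over the ``immovable'' students $S_M\cup(N\setminus N')$. Nothing then stops the following: $s$ is forced by $h_1$ and parked there, but prefers some $h_2\notin T(s)$; no immovable student guards $h_2$, so an $L$-student whom $h_2$ ranks below $s$ is admitted to $h_2$, and $(s,h_2)$ becomes a type-1 blocking pair. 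Concretely: capacities $1$, $M=\{a\to h_1,\ b\to h_2\}$, $l(h_1)=(x,a,c,\emptyset)$, $l(h_2)=(b,x,c,\emptyset)$, $l(x)=l(c)=(h_2,h_1,\emptyset)$, $N=\{x\}$, $L=\{c\}$. Your mechanism puts $x$ at $h_1$ and $c$ at $h_2$, and $(x,h_2)$ blocks. The paper's fix is to assign $s\in N'$ exactly as it assigns $L'$-students: to her best school among those in which she appears before the \emph{combined} $\Barr$ (so $x$ goes to $h_2$ above). Then every newly admitted student at $h$ sits above $\Barr(h)$ while every newly admitted student who still prefers $h$ sits below it, which is what kills these cross pairs; your version needs either this change or $N'$ added to the barrier (and the paper notes the latter sacrifices maximality of $L'$).

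A secondary point: the opening reduction ``assume every school is exactly filled, so only type-1 pairs matter'' is not legitimate, since round-$\cR_1$ capacities cannot be shrunk and an under-filled school genuinely forces acceptable new students through type-2 pairs. (For $L$-students this is moot --- stability of $M$ over all of $S$ already excludes such pairs --- but for $N$ it is not.) Your later definitions, with $N'$ taken as the students blocking $M$ and barriers defaulting to $\emptyset$, quietly repair this, but the framing as stated would let the mechanism drop students forced by under-filled schools. Your closing worry about a student steering herself into $L$ is reasonable but belongs to Setting III; for this theorem the paper's DSIC claim concerns only reported lists and its argument is the one you give.
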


\subsection{Setting B1}
\label{sec:S4}

In this setting, a set $N$ of {\em new students} arrive from other cities in round $\cR_2$. The preference lists of schools are also updated to include students in $N$, though their relative preferences between students in $S\cup \{\emptyset\}$ are unchanged. Since the capacities of schools are fixed, the addition of new students could result in some students being displaced, possibly causing them to go to another school or become unmatched.
The City wants to find a stable matching over students $S \cup N$ and schools $H$ that minimizes the number of students who are re-allocated from their original school in $M$.

\begin{theorem}
\label{thm:S4}

There is a polynomial time mechanism $\cM_2$ that finds a minimum stable reallocation with respect to Round $\cR_1$ matching $M$, students $S \cup N$, and schools $H$.
\end{theorem}

\subsection{Setting B2}
\label{sec:S3}

In this setting, the City has some new schools $H'$ that have opened up in $\cR_2$. The preference lists of students are updated to include schools in $H'$, though their relative preferences between schools in $H \cup \{\emptyset\}$ are unchanged. The City also allows schools in $H$ to increase their capacity in round $\cR_2$.
The addition of new schools could cause students in $L$ to get matched in $\cR_2$, but it might also result in some students wanting to leave their current schools to go to a new school. This could lead to vacant seats being created in the original schools, causing some other students to leave their current schools. The City wants to find a stable matching over students $S$ and schools $H\cup H'$ that minimizes the number of students who are re-allocated from their original school in $M$.
\begin{theorem}
\label{thm:S3}
	There is a polynomial time mechanism $\cM_2$ that finds a minimum stable reallocation with respect to Round $\cR_1$ matching $M$, students $S$, and schools $H\cup H'$.
\end{theorem}

    \section{Mechanisms for Type A Settings}
\label{sec.mech}

\subsection{Setting A1}
\label{sec:mS1}

We will first characterize situations under which a matching is not stable, i.e., admits a blocking pair. This characterization will be used for proving stability of matchings constructed in round $\cR_2$. For this purpose, assume that $M$ is an arbitrary matching, not necessarily stable nor related to the matching computed in round $\cR_1$. For each school $h_j \in H$, define the {\em least preferred student assigned to $h_j$}, denoted $\LPSt(h_j)$, to be the student whom $h_j$ prefers the least among the students that are assigned to $h_j$. 

Next, for each student $s_i \in S_M$, define the {\em set of schools preferred by $s_i$}, denoted  $\PSch (s_i)$ by $\{ h_j ~|~ s_i \ \mbox{prefers} \ h_j \ \mbox{to} \ M(s_i) \}$; note that $M(s_i) = \emptyset$ is allowed in this definition. Further, for each school $h_j \in H$, define the {\em set of students that prefer $h_j$ over the school they are assigned to}, denoted $\PSt (h_j)$ to be $\{ s_i ~|~  h_j \in \PSch (s_i)  \}$. Finally, define the {\em best student preferring $h_j$}, denoted $\BStP(h_j)$, to be the student whom $h_j$ prefers the best in the set $\PSt(h_j)$. If $\PSt (h_j) = \emptyset$ then we will define $\BStP (h_j) = \emptyset$; in particular, this happens if $h_j$ is under-filled.

\begin{lemma}. 
	\label{lem:blocking}
	W.r.t. matching $M$, there exists a blocking pair:
	\begin{enumerate}
		\item of type 1 iff there is a school $h_j$ s.t. $h_j$ prefers $\BStP(h_j)$ to $\LPSt(h_j)$.
		\item of type 2 iff there is a school $h_j$ that is under-filled and a student $s_i$ such that $s_i$ prefers $h_j$ to $M(s_i)$ and $h_j$ prefers $s_i$ to $\emptyset$.
	\end{enumerate}
\end{lemma}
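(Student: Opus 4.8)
The statement to prove is Lemma~\ref{lem:blocking}, which characterizes when a matching $M$ admits a blocking pair in terms of the quantities $\BStP(h_j)$ and $\LPSt(h_j)$. The proof is essentially an unpacking of the definitions, and the plan is to verify each of the two biconditionals separately, each by proving both directions.

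For part~1 (type 1 blocking pairs), the plan is as follows. Recall a type 1 blocking pair $(s_i, h_j)$ requires $s_i \notin M^{-1}(h_j)$, that $s_i$ prefers $h_j$ to $M(s_i)$, and that $h_j$ prefers $s_i$ to some student currently assigned to it. For the forward direction, suppose such a pair $(s_i, h_j)$ exists. Since $s_i$ prefers $h_j$ to $M(s_i)$, we have $h_j \in \PSch(s_i)$, hence $s_i \in \PSt(h_j)$, so $\PSt(h_j) \neq \emptyset$ and $\BStP(h_j)$ is a genuine student whom $h_j$ likes at least as much as $s_i$. Because $h_j$ prefers $s_i$ to some assigned student, it certainly prefers $s_i$ to its least preferred assigned student $\LPSt(h_j)$; transitively, $h_j$ prefers $\BStP(h_j)$ to $\LPSt(h_j)$, as required. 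For the converse, suppose there is a school $h_j$ with $h_j$ preferring $\BStP(h_j)$ to $\LPSt(h_j)$. Set $s_i = \BStP(h_j)$. Since $s_i \in \PSt(h_j)$, it prefers $h_j$ to $M(s_i)$; and since $h_j$ prefers $s_i$ to $\LPSt(h_j)$, which is an assigned student, $(s_i, h_j)$ is a type 1 blocking pair (note $s_i$ is not itself assigned to $h_j$, else $h_j$ would not strictly prefer $h_j$ to $M(s_i) = h_j$).

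For part~2 (type 2 blocking pairs), the biconditional is nearly a tautology: a type 2 blocking pair is \emph{defined} to be an unassigned-to-$h_j$ student $s_i$ with $h_j$ under-filled, $s_i$ preferring $h_j$ to $M(s_i)$, and $h_j$ preferring $s_i$ to $\emptyset$. The right-hand side of the stated biconditional simply restates these conditions, so the plan is to observe that the two are literally the same after noting that $s_i$ preferring $h_j$ to $M(s_i)$ already forces $s_i \notin M^{-1}(h_j)$.

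I expect no serious obstacle here; the entire content is careful bookkeeping with the definitions of $\PSch$, $\PSt$, $\BStP$, and $\LPSt$. The one point requiring a little care is the transitivity argument in part~1: establishing that preferring $\BStP(h_j)$ to $\LPSt(h_j)$ is equivalent to the existence of \emph{some} preferred-but-unassigned student whom $h_j$ ranks above \emph{some} assigned student, which works precisely because $\BStP(h_j)$ is the best among preferring students and $\LPSt(h_j)$ is the worst among assigned students — so if any witnessing pair exists, the extremal pair $(\BStP(h_j), \LPSt(h_j))$ witnesses it too. I would make sure to state this extremality reduction explicitly, since it is the crux of why the two maxima/minima suffice to detect blocking pairs.
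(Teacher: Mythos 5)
Your proposal is correct and follows essentially the same route as the paper: both directions of part~1 are handled by the extremality of $\BStP(h_j)$ and $\LPSt(h_j)$ (any witnessing student--school pair implies the extremal pair also witnesses, and conversely the extremal pair itself is a blocking pair), and part~2 is a direct unwinding of the definition. Your version is, if anything, slightly more explicit than the paper's about why $\BStP(h_j)$ is well-defined and why $s_i$ cannot already be assigned to $h_j$, but the underlying argument is identical.
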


\begin{proof}
\begin{enumerate}
	\item Suppose for some school $h_j$, $\BStP(h_j) = s_i$ and $\LPSt(h_j) = s_k$ and $h_j$ prefers $s_i$ to $s_k$. Then, $s_i$ prefers $h_j$ to $M(s_i)$ and $h_j$ prefers $s_i$ to $s_k$. Therefore, $(s_i, h_j)$ is a blocking pair of type 1. Next, assume that $(s_i, h_j)$ is a blocking pair of type 1. Then, it must be the case that $s_i$ prefers $h_j$ to $M(s_i)$ and $h_j$ prefers $s_i$ to $s_k$, for some student $s_k$ that is assigned to $h_j$. Clearly, $h_j$ weakly prefers $s_k$ to $\LPSt(h_j)$. Therefore $h_j$ prefers $\BStP(h_j)$ to $\LPSt(h_j)$.
	\item Both directions follow from the definition of blocking pair of type 2. 
\end{enumerate}
\end{proof}

The mechanism $\cM_2$ for round $\cR_2$ for $\textsc{$Max_L$}$ in Setting A1 is given in Figure \ref{alg:S1}. Step 1 simply ensures that the matching found by $\cM_2$ extends the round $\cR_1$ matching. Step 2 defines the Barrier for each school to be $\BStP(h_j)$; observe that this could be $\emptyset$. Step 3 determines the set $L' \subseteq L$ that can be assigned schools in a stability-preserving manner and Step 5 computes the school for each student in this subset.

\begin{figure}
	\begin{wbox}
		\textsc{$Max_L$}$(M,L)$: \\
		\textbf{Input:} Stable matching $M$ and set $L$.   \\
		\textbf{Output:} Stable, IC, $Max_L$ extension of $M$. \\ 
		\begin{enumerate}
			\item $\forall s_i \in S_M: M'(s_i) \leftarrow M(s_i)$ \\			
			\item $\forall h_j \in H: \ \Barr (h_j) \lla \ \BStP (h_j)$. \\
			\item $L' \lla \{ s_i \in L  ~|~  \exists h_j \ \ \mbox{s.t. $s_i$ appears before $\Barr (h_j)$ in $l(h_j)$}$, \\
			\hspace*{2.4cm} $\mbox{and $h_j$ appears before $\emptyset$ in $l(s_i)$} \}$. \\
			\item $\forall s_i \in L': \ \FSch(s_i) \lla \{ h_j ~|~  s_i \ \mbox{appears before} \ 
                \Barr(h_j) \ \mbox{in} \ l(h_j) \}$. \\
            \item $\forall s_i \in L': \ M'(s_i) \lla \ \mbox{Best school for} \ s_i \ \mbox{in} \ \FSch(s_i)$. \\
            \item $\forall s_i \in (L - L'): \ M'(s_i) \lla \emptyset$. \\
            
            \item Return $M'$. 
			\end{enumerate}

	\end{wbox}
	\caption{Mechanism for round $\cR_2$ for problem $\textsc{$Max_L$}$ in Setting A1}
	\label{alg:S1} 
\end{figure} 

\begin{proof} {\em of Theorem \ref{thm:S1}:} 
Suppose $\Barr(h_j) = s_i$ (or, $\emptyset$). Since all students assigned to $h_j$ from $L$ appear before $s_i$ (respectively, $\emptyset$) in $l(h_j)$, therefore by Lemma \ref{lem:blocking}, there is no type 1 (respectively, type 2) blocking pair. This establishes the stability of matching $M'$. Next, consider a student $s_k \in (L - L')$ and suppose she is assigned to school $h_j$. By the definition of $L'$, $h_j$ prefers $\Barr(h_j)$ to $s_k$, therefore, $(\Barr(h_j), h_j)$ form a blocking pair, which is of type 2 if $\Barr(h_j) = \emptyset$ and type 1 otherwise. Hence the matching found in round $\cR_2$ is the largest stable extension of $M$.

\end{proof}

For the problem of admitting fewer students, stated in Section \ref{sec:S1}, we give the following:

\begin{proposition}
	Let $k$ be the total number of students added from $L$ in round $\cR_2$ in the previous theorem and let $k' < k$. There is a polynomial time mechanism $\cM_2$ that is stability-preserving,  and extends matching $M$ to $M'$ so that $|M'| - |M| = k'$.
\end{proposition}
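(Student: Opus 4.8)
The plan is to reduce the task to a capacitated stable-matching computation. Recall from the proof of Theorem \ref{thm:S1} that the maximum extension assigns every student of $L'$ (so that $k=|L'|$) to her best school in $\FSch(s_i)$, where the barriers $\Barr(h_j)=\BStP(h_j)$ are computed \emph{solely} from the round-$\cR_1$ matched students $S_M$. Since those barriers leave each school free to absorb arbitrarily many addable students, this is the unique stable extension matching the whole of $L'$, and by the one-sided analogue of the Rural Hospitals theorem every stable extension sharing this same feasibility structure matches exactly the students of $L'$. Hence to admit only $k'$ of them we must tighten the feasibility by capping the number of extra seats each school offers.

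Concretely, I would build the market $\Gamma$ whose agents are the students of $L$ and the schools of $H$: student $s_i$ ranks the schools of $\FSch(s_i)$ that she prefers to $\emptyset$ according to $l(s_i)$, and school $h_j$ ranks its addable students according to $l(h_j)$; the round-$\cR_1$ students already occupying $h_j$ are kept in place and, by stability of $M$ together with the definition of $\Barr(h_j)$, outrank every addable student, so they are never displaced. Each school is given an \emph{extra-seat quota} $q_j\ge 0$, and $\cM_2$ runs the student-proposing deferred-acceptance algorithm of $L$ against these quotas, matching a student to her best school that still has a free extra seat. The returned matching sets $c'(j)$ equal to the number of students actually placed at $h_j$, so no school is left under-filled.

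For correctness I would argue stability through Lemma \ref{lem:blocking}. A pair $(s_i,h_j)$ with $h_j\notin\FSch(s_i)$ cannot block: $s_i$ lies after $\Barr(h_j)=\BStP(h_j)$, while every student seated at $h_j$ (original or added) lies before it, so $h_j$ prefers $\LPSt(h_j)$ to $s_i$, and $h_j$ is not under-filled. A pair with $h_j\in\FSch(s_i)$ cannot block because deferred acceptance returns a matching that is stable in $\Gamma$. For the exact count, start from $\sum_j q_j$ large enough to reproduce the maximum matching (matched count $k$) and delete extra seats one at a time down to $\sum_j q_j=0$ (count $0$); by the standard monotone comparative statics of many-to-one matching each deletion lowers the number of matched students by at most one, so the count sweeps through every integer in $\{0,\dots,k\}$ and in particular equals $k'$ for some quota profile, which $\cM_2$ adopts.

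The delicate point, and the step I expect to be the main obstacle, is incentive compatibility. In the unbudgeted setting of Theorem \ref{thm:S1} DSIC was immediate because every student simply received her best feasible school; once a global budget of $k'$ seats forces competition, a student might hope to gain by misreporting. The saving fact is that the barriers, and hence each $\FSch(s_i)$ and the whole acceptability structure of $\Gamma$, depend only on $S_M$ and the schools' (truthful) lists, never on a left-over student's own report; against a fixed market the student-proposing deferred-acceptance algorithm is strategy-proof for the proposing side. To complete the argument one must realize the count $k'$ by a seat-deletion rule that is itself independent of any individual left-over student's preferences --- for instance always removing the currently lowest-priority filled extra seat, a quantity determined by the schools' lists and the barrier positions --- so that no student can perturb the quota profile, and then invoke strategy-proofness of deferred acceptance to conclude that truth-telling is dominant. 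Verifying that this deletion rule can be made simultaneously exact (hitting $k'$) and report-independent is the crux of the proof.
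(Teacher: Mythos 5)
Your reduction is essentially the paper's: both proofs cap the number of extra seats each school may offer and run student-proposing deferred acceptance on $L$ against those caps, starting from $M$. The difference is in how the exact count $k'$ is achieved. The paper takes the capacity profile $c'$ realized by the maximum extension of Theorem~\ref{thm:S1} and decreases it to any $c''$ with $c(h_j)\le c''(h_j)\le c'(h_j)$ and $\sum_j\bigl(c''(h_j)-c(h_j)\bigr)=k'$; it then observes that each school receives proposals from $L$ weakly better than the $c'(h_j)-c(h_j)$ students it held under the maximum extension, so \emph{every} extra seat fills and the count is exactly $k'$ with no further argument. Your seat-deletion sweep with the ``each deletion loses at most one match'' comparative statics reaches the same conclusion, but it is more roundabout and is precisely what forces you into the report-independence worry you flag at the end. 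That worry is legitimate, but you should know the paper does not dispose of it either: its $c''$ is derived from $c'$, which depends on the reported lists of $L$, and the paper's proof asserts DSIC only implicitly via strategy-proofness of student-proposing deferred acceptance, without addressing the dependence of the quota profile on reports. So your proposal matches the paper in substance and honestly isolates the genuinely delicate step; the one concrete simplification worth importing is the sandwiching $c\le c''\le c'$, which makes the exact count immediate and eliminates the intermediate-value sweep entirely.
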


\begin{proof}
	Let $c'$ denote the capacities of schools after round $\cR_2$ as per Theorem \ref{thm:S1}. Note that the total difference in capacities $c' - c$ over all schools is $k$, where $c$ is the capacity function in round $\cR_1$. Starting with $c'$, arbitrarily decrease the capacities of schools to obtain capacity function $c''$ so that for any school $h_j$, $c'(h_j) - c(h_j) \geq c''(h_j) - c(h_j) \geq 0$ and the total of $c'' - c$ over all schools is $k'$. Starting with $M$, the round $\cR_1$ matching, run the Gale-Shapley algorithm with students from $L$ proposing and with current capacities fixed at $c''$. 
	
	We claim that when this algorithm terminates, the matching found will be student-optimal, stable and each school $h_j$ will be allocated $c''(h_j)$ students from $L$. To see the last claim, observe that the proposals received by any school $h_j$ will be weakly better than the $c'(h_j) - c(h_j)$ students of $L$ who were allocated to $h_j$ under matching $M'$.
\end{proof}

\subsection{Setting A2}
\label{sec:mS2}

The mechanism for round $\cR_2$ for $\textsc{$Min_N Max_L$}$ in Setting A2 is given in Figure \ref{alg:S2,1}. Suppose there is a school $h_j$, student $s_k \in S_M$ is assigned to it and there is a student $s_i \in N$ such that $h_j$ prefers $s_i$ to $s_k$. Now, if $s_i$ is kept unmatched, $(s_i, h_j)$ will form a blocking pair of type 1 by Lemma \ref{lem:blocking}. Next suppose $h_j$ is under-filled and there is a student $s_i \in N$ such that $h_j$ and $s_i$ prefer each other to $\emptyset$. This time, if $s_i$ is kept unmatched, $(s_i, h_j)$ will form a blocking pair of type 2 by Lemma \ref{lem:blocking}. Motivated by this, for a student $s_i$, define the {\em set of schools forming blocking pairs with $s_i$}, denoted $\PgSch(s_i)$, to be:
\[ \PgSch(s_i) = \{h_j \in H ~|~  \mbox{$h_j$ prefers $s_i$ to \LPSt($h_j$)},  \text{ and }   s_i 
				\text{ prefers $h_j$}   \text{ to $\emptyset$} \} \bigcup \]
		\[\hspace*{4cm} \{ h_j \in H ~|~  \mbox{$h_j$ is under-filled and $h_j$ and $s_i$ prefer each other to $\emptyset$}  \}. \]

Therefore, all students in $N'$, computed in Step 3, need to be matched. Our mechanism keeps all students in $N - N'$ unmatched, thereby minimizing the number of students matched from $N$.

We next describe the various barriers that need to be defined. The first one, defined in Step 2, plays the same role as that in Figure \ref{alg:S1}. As before, if $h_j$ is under-filled, $\Barr1(h_j) = \emptyset$. If a student $s_i \in (N' \cup L')$ appears after $\Barr1(h_j)$ in $l(h_j)$ and is assigned to $h_j$, then $(\Barr1(h_j), h_j)$ will form a blocking pair. The second one, $\Barr2(h_j)$ in $(N - N')$ defined in Step 4. Again, if $s_i \in (N' \cup L')$ appears after $\Barr2(h_j)$ in $l(h_j)$ and is assigned to $h_j$, then $(\Barr2(h_j), h_j)$ will form a blocking pair. In step 5, $\Barr(H_j)$ is defined to be the more stringent of these two barriers.

\begin{figure}
	\begin{wbox}
		\textsc{$Min_N Max_L$}$(M,N,L)$: \\
		\textbf{Input:} Stable matching $M$, and sets $N$ and $L$.  \\
		\textbf{Output:} Stable, IC, $Min_N Max_L$ extension of $M$. \\ 
		\begin{enumerate}
			\item $\forall s_i \in S_M: M'(s_i) \leftarrow M(s_i)$ \\			
			\item $\forall h_j \in H: \ \Barr1 (h_j) \lla \ \BStP (h_j)$. \\

			\item $N' \lla \{s_i \in N ~|~ \PgSch(s_i) \ \mbox{is non-empty} \}$. \\

			\item $\forall h_j \in H: \ \Barr2 (h_j) \lla \ \mbox{Best student for} \ h_j \ \mbox{in} \ (N - N')$. \\
			 
		    \item $\forall h_j \in H: \ \Barr (h_j) \lla \ \mbox{Best student for} \ h_j \ \mbox{in} \ \{\Barr1(h_j), \Barr2(h_j) \}$. \\
		    
		    \item $L' \lla \{ s_i \in L  ~|~  \exists h_j \ \mbox{s.t. $s_i$ appears before $\Barr (h_j)$ in $l(h_j)$}$, \\
			\hspace*{2.4cm} $\mbox{and $h_j$ appears before $\emptyset$ in $l(s_i)$} \}$. \\
		    
			\item $\forall s_i \in (N' \cup L'): \ \FSch(s_i) \lla \{ h_j ~|~  s_i \ \mbox{appears before} \ 
                \Barr(h_j) \ \mbox{in} \ l(h_j) \}$. \\
			 \item $\forall s_i \in (N' \cup L'): \  M'(s_i) \lla \ \mbox{Best school for} \ s_i \ \mbox{in} \ \FSch(s_i)$. \\
			 
			 \item $\forall s_i \in ((L - L') \cup (N - N')): \ M'(s_i) \lla \emptyset$. \\

            \item Return $M'$. 
			\end{enumerate}
	\end{wbox}
	\caption{Mechanism for round $\cR_2$ for $\textsc{$Min_N Max_L$}$ in Setting A2}
	\label{alg:S2,1} 
\end{figure} 

The final question is which school should $s_i \in N'$ be matched to? One possibility is to compute for each student $s_i$ the set
\[ T(s_i) = \{h_j \in H ~|~  \exists s_k \ \mbox{s.t. 
			    $M(s_k) = h_j$, $h_j$ prefers $s_i$ to $s_k$, and $s_i$ prefers $h_j$ to $\emptyset$} \}, \]
	and match $s_i$ to her best school in $T(s_i)$.

Assume that $s_i$ is matched to $h_j$ under this scheme. A blocking pair may arise as follows: Assume $s_i$ prefers school $h_k$ to $h_j$ (of course, $h_k \notin T(s_i)$), some student $s_l \in L'$ has been assigned to $h_k$ and $h_k$ prefers $s_i$ to $s_l$. If so, $(s_i, h_k)$ will form a blocking pair. One remedy is to redefine the barrier for $h_k$ so $s_l$ is not assigned to $h_k$. However, this will make the barrier more stringent and the resulting mechanism will, in general, match fewer students from $L$ than our mechanism. The latter is as follows: simply match $s_i$ to the best school which prefers her to the Barrier of that school.

\begin{proof} {\em of Theorem \ref{thm:S2}:} \
The arguments given above already establish stability of matching $M'$ computed. 

Clearly, each student in $N'$ must be matched because otherwise she forms a blocking pair w.r.t. $M$. Since our mechanism does not match any more students from $N$, it achieves $\textsc{$Min_N$}$. As argued above, not imposing the more stringent of the two barriers computed may result in a blocking pair. Therefore our mechanism imposes the minimum restrictions needed for stability when it is attempting to match students from $L$. Hence it achieves $\textsc{$Min_N Max_L$}$.
\end{proof}

Next, we turn to a slightly different problem within Setting A2, namely find the largest subset of $(N \cup L)$ that can be matched in a stability-preserving manner. We call this problem $\textsc{$Max_{N \cup L}$}$. As shown below, this mechanism also solves the problems $\textsc{$Max_N Max_L$}$ and $\textsc{$Max_L Max_N$}$, namely maximizing the number of students matched from $L$ after having maximized the number of students matched from $N$ and vice versa.

\begin{theorem}
\label{thm:S2,2}
	There is a polynomial time mechanism $\cM_2$ that finds the largest subset of $(N \cup L)$ that can be matched to schools and added to the current matching while maintaining stability. This mechanism also solves $\textsc{$Max_N Max_L$}$ and $\textsc{$Max_L Max_N$}$.
\end{theorem}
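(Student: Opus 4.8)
The plan is to reduce this to the Setting I mechanism: run exactly the procedure of Figure~\ref{alg:S1}, but with the single set $L$ replaced throughout by the combined set $N \cup L$. Concretely, I would keep every round-$\cR_1$ match fixed, set $\Barr(h_j) \lla \BStP(h_j)$ for each school (computed w.r.t.\ $M$, so that it depends only on $M$ and the preference lists of $S_M$), let $(N \cup L)'$ be the students of $N \cup L$ that beat some school's barrier at a school they prefer to $\emptyset$, and assign each such student to the best school at which she beats the barrier; all others are left unmatched. The crucial structural feature, inherited from Setting I, is that the barriers are determined entirely by $M$ and are therefore oblivious both to which students of $N \cup L$ end up matched and to what these students report. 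Note that this mechanism does not need the second barrier $\Barr2$ of Figure~\ref{alg:S2,1}: that barrier was forced only because Setting II deliberately keeps some barrier-beating students of $N$ unmatched, whereas here every matchable student is admitted.

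Stability and DSIC then follow by mirroring the proof of Theorem~\ref{thm:S1} with $N \cup L$ in place of $L$. For stability: every student assigned to $h_j$ in $M'$ beats $\BStP(h_j)$, so $h_j$ prefers $\LPSt(h_j)$ (in $M'$) to $\BStP(h_j)$; any student preferring $h_j$ to her $M'$-assignment either lies in $S_M$ (hence is weakly worse than $\BStP(h_j)$ by definition of the barrier) or lies in $N \cup L$ but fails to beat the barrier at $h_j$ (otherwise she would have been given a school at least as good), so Lemma~\ref{lem:blocking} excludes both types of blocking pair, including pairs involving two newly admitted students. For DSIC: since the barriers are independent of the reports of students in $N \cup L$, each such student faces a fixed family of schools at which she is admissible and receives the best of these she prefers to $\emptyset$, so truthful reporting is weakly dominant.

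The genuinely new content is the optimality claim together with its two refinements. First I would prove maximality exactly as in Theorem~\ref{thm:S1}: in any stable extension of $M$, a student $t \in N \cup L$ assigned to $h_j$ must be preferred by $h_j$ to $\BStP(h_j)$, for otherwise $\BStP(h_j)$ --- who prefers $h_j$ to her unchanged $M$-assignment --- would block; hence the matchable students in \emph{any} stable extension form a subset of $(N \cup L)'$, and our mechanism matches all of $(N \cup L)'$. The hard part, and where I expect the only real subtlety, is showing that this single assignment simultaneously solves $\textsc{$Max_N Max_L$}$ and $\textsc{$Max_L Max_N$}$. This rests on an \emph{independence} observation: because round-$\cR_2$ capacity is unrestricted and the barriers are fixed functions of $M$, whether a given student of $N \cup L$ can be matched depends only on her own list against the barriers, never on which other students are matched. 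Thus the matchable subsets of $N$ and of $L$ impose no trade-off on each other --- the largest number matchable from $N$ is $|N \cap (N \cup L)'|$, and after attaining it one can still admit every student of $L \cap (N \cup L)'$, and symmetrically with $N$ and $L$ interchanged. All three objectives therefore coincide with matching the whole set $(N \cup L)'$, which is precisely the mechanism's output, and DSIC carries over unchanged since it was argued at the level of individual students against fixed barriers.
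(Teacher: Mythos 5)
Your proposal is correct and follows essentially the same route as the paper: reuse the Setting~I mechanism of Figure~\ref{alg:S1} with $N \cup L$ in place of $L$, inherit stability, DSIC, and maximality from the proof of Theorem~\ref{thm:S1}, and derive the $Max_N Max_L$ and $Max_L Max_N$ claims from the fact that the fixed barriers make admissibility of each student independent of the others. Your write-up is in fact somewhat more detailed than the paper's own (quite terse) proof, but no new ideas are introduced.
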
 

\begin{proof}
We will show that the mechanism presented in Figure \ref{alg:S1}, with $(N \cup L)$ playing the role of $L$, suffices. Barriers for schools are computed as before in Step 2. Denote the subset of $(N \cup L)$ that is matched in round $\cR_2$ by $(N \cup L)'$; it consists of students $s_i \in (N \cup L)$ such that some school $h_j$ prefers $s_i$ to $\Barr(h_j)$ and $s_i$ prefers $h_j$ to $\emptyset$. If so, $s_i$ is assigned to the best such school. 

The argument given in Theorem \ref{thm:S1} suffices to show stability of the matching produced. As before, matching any student from the rest of $(N \cup L)$ will lead to a blocking pair, and hence the mechanism maximizes the number of students matched in round $\cR_2$. 

Finally, since this mechanism acts on $N$ and $L$ independently of each other, it solves $\textsc{$Max_N Max_L$}$ and $\textsc{$Max_L Max_N$}$ as well.
\end{proof}


	
	\subsection{Enumeration of Stable Extensions}
\label{sec.enumerate}

In this section we show how to enumerate all the possible stable extensions of a given stable matching with polynomial delay between any two enumerated matchings. 
Specifically, the algorithm takes as input a stable matching $M$ from $S$ to $H$ satisfying capacity $c$ and a set of new students $N = \{s_1, s_2 \ldots s_k\}$ that can be added to the schools. Here the preference lists of all schools and students are also given. The algoirthm enumerates all solutions $M'$ from $S \union N$ to $H \union \{ \emptyset\}$ such that: 
\begin{itemize}
	\item all assignments in $M$ are preserved in $M'$, and
	\item $M'$ is stable with respect to capacity $c'$ where 
	\begin{align} 
	\label{eq:capacity}
	c'(j) = 
	\begin{cases}
	\left| M'^{-1}(h_j) \right|  & \text{ if } \left| M'^{-1}(h_j) \right| > c(j),\\
	c(j)  & \text{ otherwise. } \\
	\end{cases}
	\end{align}
\end{itemize}
Note that $M'^{-1}(h_j)$ is the set of students assigned to $h_j$ under $M'$. We say that $M'$ is a \emph{stable extension of $M$ with respect to $N$}.

\begin{figure}
	\begin{wbox}
		\textsc{StableExtension}$(M,c,N)$: \\
		\textbf{Input:} Stable matching $M$, capacity $c$, new students $N = \{s_1, s_2 \ldots s_k\}$.  \\
		\textbf{Output:} Stable extensions of $M$, with polynomial delay. \\ 
		\newline
		$M_0 \leftarrow M$\\ \\
		$A_1$ = \textsc{FeasibleAssignment}$(M_0, c, s_1)$\\ \\
		For $i_1$ in $A_1$: \\ \\ 
		\mbox{\qquad}$M_1$ $\leftarrow$ Starting from $M_0$, match $s_1$ to $i_1$. \\ \\
		\mbox{\qquad}$A_2$ = \textsc{FeasibleAssignment}$(M_1, c, s_2)$. 			\\ \\
		\mbox{\qquad}For $i_2$ in $A_2$: \\ \\ 
		\mbox{\qquad \qquad} $\vdots$ \\ \\ 
		\mbox{\qquad \qquad \qquad} $A_k$ = \textsc{FeasibleAssignment}$(M_{k-1}, c, s_k)$.  \\ \\
		\mbox{\qquad \qquad \qquad} For $i_k$ in $A_k$: \\ \\ 
		\mbox{\qquad \qquad \qquad \qquad} $M_k$ $\leftarrow$ Starting from $M_{k-1}$, match $s_k$ to $i_k$. \\ \\ 
		\mbox{\qquad \qquad \qquad \qquad} Enumerate $M_k$.

	\end{wbox}
	\caption{Algorithm for enumerating stable extensions of $M$.}
	\label{alg:enumerate} 
\end{figure} 

\begin{figure}
	\begin{wbox}
		\textsc{FeasibleAssignment}$(M_e, c, s_i)$: \\
		\textbf{Input:} Stable matching $M_e$, capacity $c$, student $s_i$.  \\
		\textbf{Output:} Set $A_i$ of all possible assignments for $s_i$. Adding any assignment in $A_i$ to $M_e$ preserves stability. \\ 
		\begin{enumerate}
		\item Initilize $A_i$ to the empty set.  \\ 
		\item For each $h$ in $l(s_i)$, in decreasing order of preferences, do: \\ 
		\begin{enumerate}
		\item If $h = \emptyset$ then Return $A_i \union \{\emptyset\}$. \\ 
		\item Else $h = h_j$: \\ 
		\begin{enumerate}
		\item If $\left| M_e^{-1}(h_j) \right| < c(j)$ then Return $A_i \union \{h_j\}$. \\ 
		\item If $s_i$ appears before $\LPSt(h_j)$ then Return $A_i \union \{h_j\}$. \\ 
		\item If $s_i$ appears after $\LPSt(h_j)$ and before $\BStP(h_j)$ then \\
		 $A_i \leftarrow A_i \union \{h_j\}$.
		\end{enumerate}
		\end{enumerate}
		\end{enumerate}
	\end{wbox}
	\caption{Algorithm for finding feasible matches of $s_i$ w.r.t. current matching $M_c$.}
	\label{alg:findFeasibleMatch} 
\end{figure} 



The complete algorithm \textsc{StableExtension}$(M,c,N)$ is given in Figure~\ref{alg:enumerate}.
At a high level, the algorithm maintains a stable extension $M_e$ of $M$ 
with respect to a subset $N'$ of $N$. At each step, a student $s_i$ is added to $N'$ 
and all possible assignments $A$ of $s_i$ that are compatible to $M_e$ are identified. 
In other words, adding each assignment in $A$ to $M_e$ gives a stable extension of $M$ with respect to $N' \union \{s_i\}$. 
The algorithm branches to an assignment in $A$ and continues to the next student.
When $N' = N$, the current matching is returned. The algorithm then backtracks to a previous branching point and continues.

Figure~\ref{alg:findFeasibleMatch} gives the subroutine for finding compatible assignments. 
It takes on input the current matching $M_e$, capacity $c$, student $s_i$ 
and finds all possible assignments $A_i$ of $s_i$ to $H \union \{\emptyset\}$ such that stability is preserved.
Initially, $A_i$ is set to be an empty set. 
The subroutine then goes through the preference list of $s_i$ one by one in decreasing order.
The considered school $h$ is added to $A_i$ and the subroutine terminates if at least one of the following happens: 
\begin{itemize}
	\item $h$ is $\emptyset$,
	\item $h$ is under-filled,
	\item $h$ prefers $s_i$ to $\LPSt(h)$ with respect to $M_e$.
\end{itemize} 
Notice that in the last two scenarios above, if $s_i$ was assigned to any school after $h$ in her preference list, $(s_i,h)$ would form a blocking pair. 
Assume none of the above scenarios happens. The subroutine adds $h$ to $A$ and continues if $h$ prefers $s_i$ to $\BStP(h)$.
Otherwise, $h$ prefers $\BStP(h)$ to  $s_i$. Hence, assigning $s_i$ to $h$ would create a blocking pair. The subroutine continues to the next school in this case. 
The following lemma says that \textsc{FeasibleAssignment} correctly finds all possible assignments of a student, given the current matching, at each step.

\begin{lemma}
	\label{lem:findCorrect}
	Let $N'$ be the set of students assigned (possibly to $\emptyset$) in $M_e$, i.e., $M_e$ is a stable extension of $M$ with respect to $N'$.
	\textsc{FeasibleAssignment}$(M_e, c, s_i)$ finds all possible assignments of $s_i$ to $H \union \{\emptyset\}$ such that 
	adding each assignment to $M_e$ gives a stable extension of $M$ with respect to $N' \union \{s_i\}$.
\end{lemma}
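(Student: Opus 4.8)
The plan is to establish the two inclusions separately: \emph{soundness}, that every element placed in $A_i$ yields a stable extension of $M$ with respect to $N' \cup \{s_i\}$ when added to $M_e$, and \emph{completeness}, that every assignment of $s_i$ producing such a stable extension is placed in $A_i$. Since $M_e$ is already stable and we form $M'$ by only fixing $M'(s_i) = a$ for a single $a \in H \cup \{\emptyset\}$ while leaving all other students where they are, the first move is to pin down which blocking pairs can be \emph{newly} created. The crucial structural observation is that inserting $s_i$ into a school $h_j$ leaves the assignment of every other student unchanged, hence leaves $\PSt(h'), \BStP(h'), \LPSt(h')$ unchanged for every $h' \neq h_j$, and only makes $h_j$ \emph{more} filled, so no school becomes newly under-filled. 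Combining this with Lemma~\ref{lem:blocking}, the only candidate new blocking pairs are (a) pairs $(s_i, h')$ with $s_i$ preferring $h'$ to $a$, and (b) the single pair $(\BStP(h_j), h_j)$, which can appear only if $s_i$ becomes the new $\LPSt(h_j)$, i.e.\ exactly when $h_j$ prefers $\BStP(h_j)$ to $s_i$.

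For soundness I would exploit a loop invariant: whenever the subroutine processes a school $h'$ without returning at it, the code guarantees $h'$ is filled to its capacity $c$ and $h'$ prefers $\LPSt(h')$ to $s_i$ (this covers both the ``continue'' branch, where $s_i$ lies between $\LPSt(h')$ and $\BStP(h')$, and the ``skip'' branch, where $s_i$ is even worse than $\BStP(h')$). Therefore, at the moment the subroutine outputs a school $h_j$ (via the under-filled or ``$s_i$ before $\LPSt(h_j)$'' early returns, or via the ``continue'' addition), every school $h'$ that $s_i$ strictly prefers to $h_j$ has already been processed without returning, so $h'$ is filled and prefers $\LPSt(h')$ to $s_i$; by Lemma~\ref{lem:blocking} no pair $(s_i, h')$ blocks, ruling out (a). Ruling out (b) is immediate from the add-conditions: $h_j$ is output only when it is under-filled (so, using stability of $M_e$, $\BStP(h_j) = \emptyset$) or when $s_i$ appears before $\BStP(h_j)$, whence $s_i$ never becomes worse than $\BStP(h_j)$. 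The output $\emptyset$ is handled identically: reaching $\emptyset$ means every school $s_i$ prefers to $\emptyset$ was processed without returning, so leaving $s_i$ unmatched creates no pair $(s_i, h')$.

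Completeness I would prove by contraposition, exhibiting a blocking pair whenever $s_i \to h_j$ is omitted. Omission happens in exactly two ways. Either the subroutine returns at an earlier school $h'$ (one that $s_i$ prefers to $h_j$) via the under-filled or ``$s_i$ before $\LPSt(h')$'' branch; then $(s_i, h')$ is a type-2 or type-1 blocking pair of $M_e \cup \{(s_i, h_j)\}$, so that matching is not stable. Or the subroutine reaches $h_j$ but skips it because $h_j$ is filled and $s_i$ is after $\BStP(h_j)$; then $s_i$ becomes $\LPSt(h_j)$ and $(\BStP(h_j), h_j)$ blocks, as in case (b). The same reasoning shows $\emptyset$ is omitted only when some school $s_i$ prefers to $\emptyset$ forces $s_i$ to be matched, so leaving her unmatched would block. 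Hence every stability-preserving assignment is returned.

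The step I expect to be the main obstacle is the careful bookkeeping around capacities and the $\emptyset$ boundary rather than the visible blocking-pair casework. I must verify that ``under-filled'' in the subroutine, $\lvert M_e^{-1}(h_j)\rvert < c(j)$, coincides with being under-filled with respect to the updated capacity $c'$ of \eqref{eq:capacity}, so that type-2 blocking pairs are read off correctly; and that for the under-filled early return the assignment $s_i \to h_j$ is genuinely valid and non-blocking, which relies on $\BStP(h_j) = \emptyset$ for an under-filled school in the \emph{stable} matching $M_e$ (i.e.\ stability of $M_e$ forbids any student from both preferring an under-filled school and being acceptable to it). Getting this edge case, together with the ``only $h_j$ is disturbed'' claim, exactly right is where the argument needs the most care.
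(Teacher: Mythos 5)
Your proposal is correct and follows essentially the same route as the paper: soundness via Lemma~\ref{lem:blocking} together with the observation that only pairs involving $s_i$ or the pair $(\BStP(h_j),h_j)$ can be newly created (plus the capacity update of~(\ref{eq:capacity}) for filled schools), and completeness by noting that each early return or skip corresponds to a forced blocking pair. Your version merely makes explicit, via the loop invariant on processed-but-not-returned schools, what the paper's proof delegates to the informal discussion preceding the lemma.
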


\begin{proof}
	By the above argument, 	\textsc{FeasibleAssignment} considered all possible assignments. It suffices to prove that adding each to $M_e$ gives a stable extension.
	
	\textsc{FeasibleAssignment} only assigns student $s_i$ to school $h_j$ whenever $s_i$ is before $\BStP(h_j)$ in the preference list of $h_j$. Moreover, whenever $s_i$ is before $\LPSt(h_j)$, no school after $h_j$ in the list of $s_i$ is considered. By Lemma~\ref{lem:blocking}, no blocking pair of type 1 is created. 
	
	Similarly, when $h_j$ is under-filled, no school after $h_j$ in the list of $s_i$ is considered. Hence, no blocking pair of type 2 is created according to Lemma~\ref{lem:blocking}.
	
	Finally, whenever $h_j$ is filled, the capacity $c(j)$ increases by 1 according to (\ref{eq:capacity}). Hence, adding the assignment of $s_i$ to $h_j$ gives a stable matching with respect to the new capacity. 
\end{proof}

Another important observation is that there is at least one possible assignment 
returned by \textsc{FeasibleAssignment}$(M_e, c, s_i)$ for any input of the subroutine. To see this, 
consider two cases: 
\begin{enumerate}
	\item if all schools $h_j$ that appear before $\emptyset$ in the preference list of $s_i$ prefer $\BStP(h_j)$ to $s_i$, then $\emptyset$ is a possible assignment for $s_i$,
	\item if at least one school $h_j$ that appears before $\emptyset$ in the preference list of $s_i$ prefers $s_i$ to $\BStP(h_j)$, 
	$h_j$ is a possible assignment.
\end{enumerate}
Hence, we have the following lemma:

\begin{lemma}
	\label{lem:findReturn}
	\textsc{FeasibleAssignment}$(M_e, c, s_i)$ returns at least one possible assignment.
\end{lemma}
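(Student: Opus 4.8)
The plan is to prove Lemma~\ref{lem:findReturn} by exhibiting, for every possible input $(M_e, c, s_i)$, at least one school or $\emptyset$ that the subroutine \textsc{FeasibleAssignment} will add to $A_i$ and return. Since the subroutine walks down $s_i$'s preference list in decreasing order and is guaranteed to halt no later than reaching $\emptyset$ (every student's list over $H \cup \{\emptyset\}$ is complete, so $\emptyset$ eventually appears), the only thing at risk is that the loop terminates having added nothing. So the real content is a case analysis on the schools strictly preferred to $\emptyset$ on $s_i$'s list, showing that \emph{some} assignment is always collected.

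First I would fix attention on the schools $h_j$ appearing before $\emptyset$ in $l(s_i)$ and split into the two cases already flagged in the text. In the first case, \emph{every} such $h_j$ satisfies that $h_j$ prefers $\BStP(h_j)$ to $s_i$. I would then argue that for each of these schools, none of the early-return conditions (2(b)i), (2(b)ii) of Figure~\ref{alg:findFeasibleMatch} can fire: if $h_j$ were under-filled then $\BStP(h_j) = \emptyset$ by the convention stated just before Lemma~\ref{lem:blocking}, so $h_j$ would vacuously prefer $s_i$ to $\BStP(h_j)=\emptyset$ whenever $s_i$ itself is preferred to $\emptyset$ by $h_j$; similarly, $s_i$ appearing before $\LPSt(h_j)$ would force $s_i$ before $\BStP(h_j)$, since $h_j$ weakly prefers $\BStP(h_j)$ to $\LPSt(h_j)$. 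Hence in this case the loop passes over every school without terminating and without adding anything, until it reaches $\emptyset$ and returns $A_i \cup \{\emptyset\}$ via step (2a). So $\emptyset$ is the guaranteed returned assignment.

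In the second case, at least one school $h_j$ before $\emptyset$ has $h_j$ preferring $s_i$ to $\BStP(h_j)$. Let $h_j$ be the \emph{first} such school in $s_i$'s order. I would then show the subroutine cannot have returned before reaching $h_j$ with $A_i$ still too early to matter, and that upon processing $h_j$ one of the adding branches (2(b)i), (2(b)ii), or (2(b)iii) must fire: since $h_j$ prefers $s_i$ to $\BStP(h_j)$ and $h_j$ weakly prefers $\BStP(h_j)$ to $\LPSt(h_j)$, we get that $s_i$ is preferred to $\LPSt(h_j)$ (or $h_j$ is under-filled, covered by (2(b)i)), so $h_j$ is added to $A_i$. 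Whether this is via an early return or a continuation, $h_j \in A_i$ in the final output. Thus a non-$\emptyset$ assignment is guaranteed.

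The main obstacle I anticipate is purely bookkeeping around the $\BStP/\LPSt$ conventions for under-filled schools and the precise semantics of the ``appears before/after'' tests in Figure~\ref{alg:findFeasibleMatch}: I must make sure that in Case~1 no school ever triggers an early \emph{return} that would pre-empt reaching $\emptyset$ only to also add that school anyway (which would still be fine, since that school would itself be a returned assignment), and that in Case~2 the chosen $h_j$ genuinely lands in $A_i$ rather than being skipped by the ``continues to the next school'' branch. The cleanest route is to observe that in \emph{every} terminating branch of the subroutine the returned set $A_i$ is nonempty by construction — each \texttt{Return} statement returns $A_i \cup \{\cdot\}$ — so it suffices to show the loop always reaches some \texttt{Return}, which it does because $\emptyset$ terminates it unconditionally via step (2a). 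I would therefore streamline the argument to: the loop always halts at a \texttt{Return}, and every \texttt{Return} outputs a set containing at least the element it is unioning in, giving $|A_i| \ge 1$.
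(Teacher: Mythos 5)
Your proof is correct and follows essentially the same two-case argument the paper gives: either every school $h_j$ before $\emptyset$ on $s_i$'s list prefers $\BStP(h_j)$ to $s_i$, in which case $\emptyset$ is returned, or some such school prefers $s_i$ to $\BStP(h_j)$, in which case that school lands in the returned set. Your streamlined closing observation --- that every \texttt{Return} statement outputs $A_i$ unioned with a concrete element, and the loop must reach a \texttt{Return} because $\emptyset$ appears on the complete list --- is an even more direct route to nonemptiness than the paper's, with the caveat that it relies on Lemma~\ref{lem:findCorrect} for the returned elements actually being stability-preserving assignments.
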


From Lemmas~\ref{lem:findCorrect} and \ref{lem:findReturn}, we can prove the main theorem of this section:

\begin{theorem}
	\textsc{StableExtension}$(M,c,N)$ enumerates all possible stable extension of $M$ with respect to $N$. Moreover, the time between any two enumerations is $O((k+n)m)$. 
\end{theorem}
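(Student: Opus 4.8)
The plan is to analyze the recursion tree of \textsc{StableExtension}: a node at depth $i$ is the partial matching $M_i$ obtained after assigning $s_1, \ldots, s_i$, and the leaves, at depth $k$, are the enumerated matchings. I would establish correctness (the leaves are exactly the stable extensions of $M$ with respect to $N$, each emitted once) and the delay bound separately, using Lemma \ref{lem:findCorrect} for the former and Lemma \ref{lem:findReturn} for the latter.

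For correctness I would maintain the invariant that every node $M_i$ is a stable extension of $M$ with respect to $\{s_1, \ldots, s_i\}$. The base case $M_0 = M$ is immediate, since the capacity rule (\ref{eq:capacity}) collapses to $c$. For the inductive step, the children of $M_{i-1}$ are exactly the matchings $M_{i-1} \cup \{(s_i, a)\}$ with $a$ ranging over the output of \textsc{FeasibleAssignment}$(M_{i-1}, c, s_i)$; since $M_{i-1}$ is a stable extension by hypothesis, Lemma \ref{lem:findCorrect} says these children are precisely the stable extensions of $M$ with respect to $\{s_1, \ldots, s_i\}$ that agree with $M_{i-1}$ on $s_1, \ldots, s_{i-1}$. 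Pushing the induction to $i = k$ shows every leaf is a stable extension (soundness); distinct leaves yield distinct matchings because a leaf is determined by the tuple $(M_k(s_1), \ldots, M_k(s_k))$, so nothing is output twice.

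Completeness is the reverse inclusion and is the step I expect to be the main obstacle. Given a stable extension $M'$, the only possible root-to-leaf path is the one assigning $s_i \mapsto M'(s_i)$ at level $i$, so I must show $M'(s_i)$ lies in the feasible set at each step. By Lemma \ref{lem:findCorrect} this reduces to proving that every prefix $M'_i := M \cup \{(s_\ell, M'(s_\ell)) : \ell \le i\}$ is itself a stable extension of $M$ with respect to $\{s_1, \ldots, s_i\}$ --- a downward-closure property under deleting the later students $s_{i+1}, \ldots, s_k$. This is delicate: deletions cannot create type-1 blocking pairs, but a school that is filled in $M'$ may become under-filled in $M'_i$ and thereby create a \emph{type-2} blocking pair (Lemma \ref{lem:blocking}). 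I would attempt to rule this out by combining Lemma \ref{lem:blocking} with the stability of $M$ --- arguing that any school newly under-filled in $M'_i$ was already under-filled in $M$, so no surviving student prefers it while it prefers that student to $\emptyset$; failing a clean argument, one may have to process the new students in a carefully chosen order, or tighten the notion of stable extension so that the capacity profile is forced rather than freely chosen. Pinning down this restriction property is where the real work lies.

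The delay bound then follows from Lemma \ref{lem:findReturn}: because \textsc{FeasibleAssignment} never returns an empty set, every internal node of the depth-$k$ tree has a child, so there are no dead ends and every root-to-node path extends to a leaf. Hence to move from one emitted leaf to the next, the algorithm backtracks to the nearest ancestor with an unexplored branch and descends to a fresh leaf, touching $O(k)$ nodes; each node runs \textsc{FeasibleAssignment} once, which scans a single preference list and evaluates $\LPSt$ and $\BStP$ at each school in $O((k+n)m)$ time. To match the stated $O((k+n)m)$ delay exactly (rather than an extra factor of $k$), I would cache the feasible sets already computed along the current path, so that only the levels below the backtrack point are recomputed and the amortized per-transition cost stays within the claimed bound.
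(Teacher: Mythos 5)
Your decomposition --- soundness and uniqueness of the leaves via Lemma~\ref{lem:findCorrect}, completeness by arguing that each prefix of a target extension is itself a stable extension, and the delay bound via Lemma~\ref{lem:findReturn} --- is the same route the paper takes, and your treatment of the delay matches the paper's accounting (the paper likewise charges the $O((k+n)m)$ maintenance of $\LPSt$ and $\BStP$ once per transition and only $O(m)$ per call for scanning a preference list, so no extra factor of $k$ arises). The genuine gap is the one you yourself flag: you never establish the prefix-closure property that completeness needs, namely that $M'_i = M \cup \{(s_\ell, M'(s_\ell)) : \ell \le i\}$ is a stable extension with respect to $\{s_1,\dots,s_i\}$. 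Your proposed repair does not close it: it is true that a school under-filled in $M'_i$ was already under-filled in $M$, and stability of $M$ protects the \emph{original} students, but it says nothing about the newly placed students $s_1,\dots,s_i$, and one of \emph{them} can form a type-2 blocking pair with such a school. Concretely: take one school $h$ with $c(h)=1$, no original students, $N=\{s_1,s_2\}$ both preferring $h$ to $\emptyset$, and $l(h)=(s_2,s_1,\emptyset)$. Then $M'$ with $s_2\mapsto h$ and $s_1\mapsto\emptyset$ satisfies the paper's definition of a stable extension (with $c'(h)=1$ the pair $(s_1,h)$ blocks neither way), yet its prefix $\{s_1\mapsto\emptyset\}$ has the type-2 blocking pair $(s_1,h)$; accordingly \textsc{FeasibleAssignment}$(M,c,s_1)$ returns only $\{h\}$ and this $M'$ is never enumerated.

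For what it is worth, the paper's own completeness argument consists of the single sentence that ``by Lemma~\ref{lem:findCorrect}, $i_{l+1}$ must be in the set of possible assignments returned,'' which silently invokes that lemma's hypothesis that the prefix $M_l$ is a stable extension --- exactly the claim you could not prove, and which the example above shows is false in general. So your instinct that ``this is where the real work lies'' is correct, and your suggested remedies (restricting the processing order, or tightening the definition of stable extension so that the enumerated objects are precisely those all of whose prefixes are stable) point in the right direction for an actual fix; as written, neither your proposal nor the paper establishes the completeness half of the theorem, while the soundness half and the delay bound go through in both.
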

\begin{proof}
Let $M'$ be a stable extension of $M$ with respect to $N$. Let $i_1, i_2 \ldots i_k$ be the assignment of $s_1, s_2 \ldots s_k$ respectively. 
For $1 \leq l \leq k$, denote by $M_l$ the matching obtained by adding $s_1i_1, s_2i_2, \ldots s_li_l$ to $M$.
By Lemma~\ref{lem:findCorrect}, $i_{l+1}$ must be in the set of possible assignments returned by \textsc{FeasibleAssignment}$(M_l, c, s_{l})$. 
Hence, \textsc{StableExtension}$(M,c,N)$ correctly enumerates $M_k = M'$ at some point.

By Lemma~\ref{lem:findReturn}, there are at most $k$ calls of \textsc{FeasibleAssignment} between two matchings enumerated by \textsc{StableExtension}. Each call of \textsc{FeasibleAssignment} goes through a student preference list of at most $m+1$ schools (including $\emptyset$). The time needed for initializing and updating  $\LPSt(h_j)$ and $\BStP(h_j)$ for each school $h_j$ at each step is $O((k+n)m)$. Hence the total time is $O((k+n)m)$.
\end{proof}

	\section{Mechanisms for Type B Settings}
\subsection{Setting B1}
 We first show some structural properties of minimum stable re-allocations in this setting. The mechanism $\cM_2$ for adding new students is presented in Figure $\ref{alg:S3}$. $N' \subseteq N$, defines the set of students who form blocking pairs with the current matching $M$. $SM$ denote the set of stable matchings over the instance $ I = (S \union N, H, c)$, and $MSR$ represents the set of all minimum stable re-allocations of $M$. For all $s_i \in N$ we set $M(s_i) = \emptyset$

\begin{lemma}\label{lem:student_worse}
Each student $s_i \in S$ does weakly worse in any minimum stable re-allocation.
\end{lemma}

\begin{proof}
This follows directly from the fact that $M$ is a student-optimal stable matching.
\end{proof}

\begin{definition}
$s_i \in S \union N$ is \textbf{moved} in $M' \in MSR$, if $M(s_i) \neq M'(s_i)$
\end{definition}

\begin{lemma}\label{lem:same-move}
All minimum stable re-allocations of $M$ move the same set of students, $S_R$.
\end{lemma}

\begin{proof}
For all $s_i \in L \union (N - N')$, $s_i$ must come after $\LPSt(h_j)$ for all schools $h_j$. Since the capacities of schools are fixed, these students can never be matched in any stable matching.
Let $M'$ be some minimum stable re-allocation of $M$ with a student, $s_i \in S_M$, such that $M'(s_i) = \emptyset$. Since $M'$ is a stable matching, by the Rural Hospitals Theorem $s_i$ will remain unmatched in all stable matchings over updated instance. Similarly the students $s_i \in N'$ who do get matched in $M'$, will be matched in all stable matchings.

Finally, let there be an $s_i \in S_M$ and $M', M''$ be two minimum stable re-allocations such that $M(s_i) = M'(s_i) \neq M'(s_i)$. Then the following include all possibilities for $s_i$
    \begin{itemize}
		\item $S_{1} = \{s_i \in S_M \ | \ M(s_i) = M'(s_i) \neq M''(s_i) \}$
		\item $S_{2}	= \{ s_i \in S_M \ | \ M(s_i) = M''(s_i) \neq M'(s_i) \}$
		\item $S_{3} = \{ s_i \in S_M \ | \ M(s_i) \neq M'(s_i) \textnormal{ and } M(s_i) \neq M''(s_i) \}$
		\item $S_{4} = \{s_i \in S_M \ | \ M(s_i) =  M'(s_i) = M''(s_i) \}$
	\end{itemize}
	
Now if we consider the matching $M_U = M' \vee M''$(the school they prefer more)  this will send all $s_i \in S_{1}, S_{2}, S_{4}$ to their original partners in $M$, since from Lemma $\ref{lem:student_worse}$ we have that each $s_i$ must do weakly worse in any $MSR$. From Theorem $\ref{sec:stable-matching-lattice}$, $M_U$ is the join of two stable matchings and hence a stable matching.\\
The number of re-allocations in $M', M'', M_U$ is $|S_{3}| + |S_{2}|, \ |S_{3}| + |S_{1}|$ and 
$M_U = |S_{3}|$, respectively.
By assumption $S_{1}$ is non-empty and thus $M_U$ requires fewer re-allocations than $M''$, a contradiction. Thus if $s_i$ is moved in one re-allocation it is moved in all re-allocations.

\end{proof}

\begin{corollary}\label{cor:S_F-fixed}
All students $s_i \in S \union N - S_R$ are matched to the same school in all minimum stable  re-allocations.
\end{corollary}

We denote these students as $S_F$, and let $M_F$ represent the matching restricted to these students. Then for all $M' \in MSR$ and $s_i \in S_F$, $M_F(s_i) = M(s_i) = M'(s_i)$.
Let $H_R$, be the set of schools that students in $S_R$ are matched to in some minimum stable re-allocation then we have:

\begin{corollary} \label{cor:same-H}
All minimum stable re-allocations will match students in $S_R$ to schools in $H_R$. Moreover, if $k$ students from $S_R$ are matched to a school $h_j \in H_R$, then all minimum stable re-allocations will have $k$ students from $S_R$ matched to $H_R$.
\end{corollary}


Consider the stable matching instance $I'$, defined below:
\begin{itemize}

    \item $\forall s_i \in S_R, \ \Barr(s_i) = $ Most preferred $h_j \in H - H_R$, such that $h_j$ is currently under-filled and prefers $s_i$ to $\emptyset$, or $h_j$ prefers $s_i$ to $\LPSt(h_j)$.
    
    \item $\forall h_j \in H_R, \ \Barr(h_j) = \BStP(h_j)$ among students in $S_F$
    
    \item $\forall s_i \in S_R$ $l'(s_i) = l(s_i)$. Place the $\emptyset$ to the immediate left of $\Barr(s_i)$
    
    \item $\forall h_j \in H_R$ $l'(h_j) = l(h_j)$. Place the $\emptyset$ to the immediate left of $\Barr(h_j)$
    
    \item Let $M'$ be some $MSR$, then $c'(h_j) = |\{s_i \in S_R \ | \ M'(s_i) = h_j  \}|$
    
    \item $I' = (S_R, H, c')$ with preference lists $l'(s_i), l'(h_j)$ defines a stable matching instance.

	


	
\end{itemize}

\begin{lemma}\label{sec:allowed-edges}
If a student (resp. school) in $S_R$ (resp. $H_R$) comes after the barrier of $h_j \in H_R$ (resp. $s_i \in S_R$) then this student (resp. school) cannot be matched to $h_i$ (resp. $s_i$) in any minimum stable re-allocation.
\end{lemma}

\begin{proof}
	Let $M'$ be a minimum stable re-allocation where such a pair $(s_i, h_j)$ exists. By assumption $s_i$ must be to the right of $\Barr(h_j)$ on $h_j$'s preference list, or $h_j$ must be to the right of $\Barr(s_i)$ on $s_i$'s preference list.\\
    $\Barr(h_j) \in S_F$ and so its partner is fixed in all $MSR$ by Corollary $\ref{cor:S_F-fixed}$. $\Barr(h_j)$ prefers $h_j$ to its partner in $M$, and if $s_i$ is  to the right of $\Barr(h_j)$ on $l'(h_j)$ then $(\Barr(h_j), h_j)$ forms a blocking pair in $M'$. Similarly if $h_j$ is to the right of $\Barr(s_i)$ on $'l(s_i)$ then $(s_i, \Barr(s_i))$ would form a blocking pair in $M'$, either way a contradiction.

\end{proof}

Let $SM_{I'}$ denote the set of all stable matchings over $I'$.

\begin{lemma}\label{sec:compose-lemma}
	$\forall \ M_{I'} \in SM_{I'},\ M'= M_{I'} \union M_F$ is a minimum stable re-allocation.
\end{lemma}

\begin{proof}
	We only need to show that $M'$ is stable. Let $(s_i, h_j)$ be a blocking pair of type 1, then $s_i \in S_F$ and $h_j \in H_R$, or $s_i \in S_R$ and $h_j \in H - H_R$. Let us first consider the former and let $\LPSt(h_j) = s_j$ in $M'$. $h_j$ must prefer $s_i$ to $s_j$. Then $s_i$ is ranked atleast as high as $\Barr(h_j)$ and $s_j$ comes after the Barrier on $h_j's$ preference list. By construction this can never happen, and we get a contradiction. A similar argument works for the case when $s_i \in S_R$ and $h_j \in H- H_R$.
	If $(s_i, h_j)$ is a blocking pair of type 2 then $(s_i, h_j)$ would form a blocking pair in $M$.
	
\end{proof}

\begin{lemma} \label{sec:decompose-lemma}
	Any $M' \in MSR$ can be decomposed into $M_{I'} \union M_F$, where $M_{I'} \in SM_{I'}$.

\end{lemma}

\begin{proof}
	Let $M' \in MSR$, by Corollary $\ref{cor:S_F-fixed}$, $M_F \subseteq M'$. By Corollary $\ref{cor:same-H}$, students in $S_R$ must be matched to students in $H_R$. Restricting $M'$ to students in $S_R$, there must be a stable matching over $(S_R, H_R, c')$ with preference lists $l(s), l(h)$. By Lemma $\ref{sec:allowed-edges}$, the students and schools moved to the right of $\emptyset$ in $l'(s_i)$ and $l'(h_j)$ can never be matched to $s_i$ or $h_j$ in any $MSR$. Hence $M'$ restricted to students in $S_R$ must be a stable matching over $I'$ 

\end{proof}

\begin{lemma} \label{lem:msr-lattice}
	$(MSR, \succeq)$ defines a sublattice of $(SM, \succeq)$.
\end{lemma}

\begin{proof}
Let $M_1, M_2 \in MSR$. Define $M_U = M_1 \vee M_2$, and $M_L = M_1 \wedge M_2$, where for each $s_i, \ M_U(s_i)$ is the school it weakly prefers in $M_1(s_i), M_2(s_i)$ and $M_L(s_i)$ is the complement of $M_U(s_i)$.\\
By Lemma $\ref{sec:decompose-lemma}$, we can decompose $M_1$ as $M_{I'}^{1} \union M_F$, and $M_2$ as $M_{I'}^{2} \union M_F$ where $M_{I'}^{1}$ and $M_{I'}^{2} \in SM_{I'}$. For $s_i \in S_F$, $M_1 \vee M_2 = M_1 \wedge M_2 = M_F$. 
For $s_i \in S_R$ since $M_{I'}^{1}, M_{I'}^{2} \in S_{I'}$ by Theorem $\ref{sec:stable-matching-lattice}$, $ \ M_{I'}^{1} \vee M_{I'}^{2}$ and $M_{I'}^{1} \wedge M_{I'}^{2}$ are also stable matchings in $SM_{I'}$. Then from Lemma $\ref{sec:compose-lemma}$ $M_U = M_{I'}^{1} \vee M_{I'}^{2} \union M_F$ is a MSR, and $M_L = M_{I'}^{1} \wedge M_{I'}^{2} \union M_F$ is also MSR. Thus $M_L$ and $M_U$ are the meet and join of two minimum stable re-allocations. 
\end{proof}

The lattice structure defined above leads to the following definitions:

\begin{definition} \textbf{Student-Optimal (School-Pessimal) Minimum Stable Re-allocation:}\\
Each member $s_i \in S_R$ is matched to its best possible partner among all minimum stable re-allocations.\\
\end {definition}

\begin{definition} \textbf{Student-Pessimal (School-Optimal) Minimum Stable Re-allocation:}\\
Each member $s_i \in S_R$ is matched to its worst possible partner among all minimum stable re-allocations\

\end{definition}

\begin{figure}[htbp]
	\begin{wbox}
		\textsc{Adding New Students}$(M,N)$: \\
		\textbf{Input:} Stable matching $M$ and set $N$.   \\
		\textbf{Output:} Minimum stable re-allocation of $M$. \\ 
		\begin{enumerate}

			\item $\forall s_i \in S_M: M'(s_i) \leftarrow M(s_i)$ \\	
			
			\item While $\exists s_i$  unmatched and $(s_i, h_j) $ form a blocking pair do\\
			    \begin{enumerate}
			        \item $h \leftarrow $ Best possible $h_i$ in Schools-FBPairs($s_i$)\\
			        
			        \item if h is filled to capacity then unmatch $\LPSt(h)$ \\

			        \item $M'(s_i) \leftarrow h$\\
			    \end{enumerate}
		    
		    \item Return $M'$.\\

			\end{enumerate}

	\end{wbox}
	\caption{Mechanism $\cM_2$ for adding new students in round $\cR_2$}
	\label{alg:S3} 
\end{figure} 

\begin{observation}\label{obs:GS-equal}
$\cM_2$ can be thought of as a continuation of Gale-Shapley with the inclusion of new students, where initially in round $\cR_1$ only students in $S$ propose to schools, and in round $\cR_2$ the students in $N$ also start proposing to schools. Like Gale-Shapley, $\cM_2$ terminates with a maximal matching having no blocking pairs. Since the Gale-Shapley algorithm produces the same matching independent of the order of proposals among students, $\cM_2$ is equivalent to running Gale-Shapley over the whole instance.

\end{observation} 






\begin{proof} {\em of Theorem \ref{thm:S3}:}
We first show that if $s_i \in S$ is matched to $h_j$ in any step of $\cM_2$, then $s_i$ is matched to $h_j$ or worse in any minimum stable re-allocation. The proof is by induction. At the beginning of $\cM_2$ this is true because of Lemma \ref{lem:student_worse}. Assume during some step $n$, $\cM_2$ matches $s_i$ to $h_j$.
Let $M^*$ be a minimum stable re-allocation where $s_i$ is matched to someone better than $h_j$, say $h_l$. Then at step $n$, $h_l$ preferred $\LPSt(h_l)$ to $s_i$. For $M^*$ to be stable then there must be a student matched to $h_l$ at step $n$, that is now matched to a school it prefers more. But this contradicts the inductive hypothesis.
$\cM_2$ thus makes the minimum number of re-allocations.


	
	
\end{proof}

\begin{corollary}
$\cM_2$ produces a student-optimal minimum stable re-allocation.
\end{corollary}

\begin{lemma} \label{lem:M_3}
	There exists a mechanism $\cM_3$, that finds a school-optimal minimum stable re-allocation in polynomial time.

\end{lemma}

\begin{proof}
	From Lemma $\ref{sec:decompose-lemma}$, the school-optimal minimum stable re-allocation corresponds to the school-optimal stable matching in $SM_{I'}$. To find a school-optimal stable matching of $SM_{I'}$, run $M_2$ on the whole instance and find the sets $S_R, H_R, M_F$. We can then construct $I'$, and run a school proposing version of Gale-Shapley algorithm on this instance to find the school-optimal matching, $M_{opt}$. Applying Lemma $\ref{sec:compose-lemma}$, taking the union of $M_F$ and $M_{opt}$ will give a school-optimal minimum stable re-allocation.
\end{proof}

	\subsection{Setting B2}

A first approach to finding a minimum stable minimum re-allocation in Setting $B2$ would be to run Gale-Shapley over the whole instance. However unlike Setting B1, Example $\ref{ex:GS_bad_MSR}$ shows that this could require as many as $|S|$ possible re-allocations.

\begin{example}\label{ex:GS_bad_MSR}
Let there be $n+1$ students and schools. The preference lists (mod $n+1$) for any student $s_i$ is $(h_{i-1}, h_{i}, ..., h_{i-2})$ and the preference list for any school $h_j$ is $(s_{j}, s_{j+1}, ..., s_{j-1})$. In round $\cR_1$, all participants but $h_{n+1}$ are present and each school has $1$ seat. In round $\cR_2$, $h_{n+1}$ arrives with capacity $1$. The only stable matching, $M$, from round $\cR_1$ would match each $s_i \rightarrow h_i$ and $s_{n+1}$ would remain unmatched. Assigning $s_{n+1}$ to $h_{n+1}$ would result in a stable matching requiring no re-allocations. However, running Gale-Shapley over all participants would yield the matching where each $s_{i} \rightarrow h_{i-1}$, but this matching requires $n$ re-allocations.\\
\end{example}




	



\begin{lemma}\label{lem:weak-improve}
Each student weakly improves in any minimum stable reallocation.
\end{lemma}

\begin{proof}
Suppose $M'$ is a minimum stable re-allocation of $M$, such that some students prefer their match in $M$ to $M'$. Let $W$ denote these set of students. Consider $M^* = M \vee M'$, the matching where each $s_i$ chooses the better of $M(s_i), M'(s_i)$. We show that $M^*$ is stable and satisfies the capacity constraints of schools.\\
To see that $M^*$ satisfies the capacities of schools note that $M^*$  only moves the students in $W$ from $M'$ back to their original schools. If a set of students in $W$ leave a school, $h_j$, in going from $M$ to $M'$, then the set of students replacing them at that school must also be in $W$. This is because $h_j$ must prefer the replacing students to the leaving students. If there is a replacing student, $s_i$, not in $W$ $(s_i, h_j)$ would form a blocking pair in $M$\\
To argue stability, let $(s_i, h_i), (s_j, h_j) \in M^*$, and $(s_i, h_j)$ be a blocking pair of type 1. Then $s_i$ is matched to $h_i$ or worse in both $M$ and $M'$, but $(s_j, h_j)$ must be in one of these matchings contradicting stability. If $(s_i, h_j)$ was a blocking pair of type $1$, then it must be a blocking pair in $M$ or $M'$.\\
$M^*$ is a stable re-allocation of $M$ requiring $|W|$ fewer re-allocations than $M'$, a contradiction.

\end{proof}

The lattice structure given in the previous section carries over to Setting B2 as well. The reason for this is both Settings B1 and B2 can be reduced to an instance where schools and students have unit capacity. In round $\cR_1$ a stable matching is found and in round $\cR_2$ a set of new participants arrive on one side. Since each school has unit capacity, schools and students become interchangeable.

\begin{lemma}
All minimum stable re-allocations of $M$ move the same set of students, $S_R$.
\end{lemma}

\begin{proof}
The proof is similar to Lemma $\ref{lem:same-move}$; to arrive at a contradiction consider the matching $M' \vee M''$
\end{proof}

By including schools in $H'$ in the construction of barriers, we can apply the same analysis from lemma $\ref{lem:msr-lattice}$, to get:

\begin{lemma}
$(MSR, \succeq)$ is a sublattice of $(SM, \succeq)$.
\end{lemma}

\begin{figure}[htbp]
	\begin{wbox}
		\textsc{Adding New Schools}$(M,H')$: \\
		\textbf{Input:} Stable matching $M$ and set $H'$.   \\
		\textbf{Output:} Minimum stable re-allocation of $M$. \\ 
		\begin{enumerate}
			\item $\forall s_i \in S_M: M'(s_i) \leftarrow M(s_i)$ \\	
			
			\item While $\exists h_j \in H \cup H'$ with unmet-capacity and $\BStP (h_j) \neq \emptyset$:\\
			    \begin{enumerate}
			        \item Break current match if exists of $\BStP (h_j)$\\
			        \item $M' \leftarrow M' \cup (\BStP(h_j), h_j)$\\
			    \end{enumerate}
		    
		    \item Return $M'$.\\

			\end{enumerate}

	\end{wbox}
	\caption{Mechanism $\cM_2$ for adding new schools in round $\cR_2$}
\end{figure}

\begin{proof} {\em of Theorem \ref{thm:S3}:}
We show that if $s_i$ is matched to $h_j$ in any step of $\cM_2$, then $s_i$ is matched to $h_j$ or better in any minimum stable re-allocation.
The proof is by induction. At the beginning of $\cM_2$ this is true because of Lemma \ref{lem:weak-improve}. Assume during some step $n$, $\cM_2$ matches $s_i$ to $h_j$.
Let $M^*$ be a minimum stable re-allocation where $s_i$ is matched to someone worse than $h_j$. Then $h_j$ must prefer its $\LPSt(h_j)$ in $M^*$ to $s_i$. At step $n$, $h_j$ had unmet-capacity and $s_i$ was $\BStP(h_j)$, so there must be a student $s_j$ matched to $h_j$ in $M^*$ who was matched to a school it preferred more at step $n$. This contradicts the inductive hypothesis.
Since $\cM_2$ moves students from their original schools when they have to be moved, it performs the minimum number of re-allocations.
\end{proof}

\begin{corollary}
$\cM_2$ produces a school-optimal minimum stable re-allocation.
\end{corollary}

\begin{lemma}
    There exists a mechanism $\cM_3$, that finds a student-optimal minimum stable re-allocation in polynomial time.
\end{lemma}

\begin{proof} 
    Apply the same procedure outlined in lemma $\ref{lem:M_3}$ on the student-optimal matching over $SM_{I'}$.
\end{proof}






    
    \section{Incentive Compatibility} \label{sec:incentives}

For the four settings discussed, it would be highly desirable if we could prove that mechanism $\cM_2$ in round $\cR_2$ is DSIC. We show that for Setting B1 that this truly is the case. Unfortunately for Settings $A1, A2$ and $B2$ we show that the current mechanisms outlined above are not incentive compatible. We relax DSIC and consider the weaker notion of of a  mechanism for which {\em incentive compatibility is a Nash equilibrium (ICNE)}. Under such a mechanism, a student cannot gain by misreporting her choices, if all other students are truthful. We show that no mechanisms in round $\cR_2$ for Setting A1, A2 and B2 can be even ICNE.

\begin{lemma}
$\cM_2$ in Setting B1 is DSIC for students.
\end{lemma}

\begin{proof}
Since $\cM_1$ produces a student-optimal matching, from observation $\ref{obs:GS-equal}$, the matching returned by $\cM_2$ is the same as running the Gale-Shapley Algorithm on all participants. DSIC follows from DSIC of the latter.
\end{proof}

\begin{example}\label{ex:A1}
Let there be two students $A,B$ and two schools $1,2$. The preference lists of the students are $l(A) = (1,2), \ l(B) = (1,2)$. The preference lists for the schools are $l(1) = (A, B), \ l(2) = (B, A)$. In round $\cR_1$ each school has  a capacity of $1$ seat.
\end{example}

\begin{lemma}
There is no pair of stability-preserving, ICNE mechanism $(\cM_1, \cM_2)$ for Setting A1 and A2.
\end{lemma}

\begin{proof}
Consider example $\ref{ex:A1}$. Under truthful reporting there is only one stable matching in round $\cR_1$: $(A,1), (B,2)$. The extension produced by $\cM_2$ in round $\cR_2$ would remain the same since $L$ is empty. However, if B misreports its preference list as $(1, \emptyset)$ then the only stable matching, $M$, in round $\cR_1$ would be $(A,1)$. In round $\cR_2$ any mechanism that finds a maximal extension of $M$ will try to match $B$, the only valid stable extension is $(A,1), (B,1)$.
B benefits from misreporting its preference list.
\end{proof}


\begin{example} \label{ex:B2}
Assume there are 2 students $A, B$ and three schools $1,2,3$. The preference lists of the students are $l(A) = (2,1,3), \ l(B) = (1,2,3)$. The preference lists of the schools are $l(1) = (A,B), \ l(2) = (B,A), \ l(3) = (A,B)$. In round $\cR_1$, school $1$ has $1$ seat and in round $\cR_2$ schools $2$ and $3$ arrive with $1$ seat each. 
\end{example}

\begin{lemma}
There is no pair of stability-preserving, ICNE mechanism $(\cM_1, \cM_2)$ for Setting B2.
\end{lemma}

\begin{proof}
Consider example $\ref{ex:B2}$. $(A,1)$ is the only valid stable matching after round $\cR_1$. Under truthful reporting assigning $(B,2)$ in round $\cR_2$ will be the only stable matching with no re-allocations. However if $A$ misreports its preference as $(2,3,1)$ then the only stable matching in round $\cR_2$ is $(A,2),(B,1)$ which is preferred by $A$.

\end{proof}

The key distinction for incentive compatibility between Setting A1,A2 and B1, is that in B1 if a student is unmatched after round $\cR_1$ it will remain unmatched after round $\cR_2$. However in Settings A1, A2 we try to accommodate students who were unmatched after round $\cR_1$, so they still have a chance to get matched in $\cR_2$. This allows them the possibility of affecting the matching produced in $\cR_1$ by misreporting their preference list so as to make the Barriers computed by $\cM_2$ more favorable for them.

    \section{NP-Hardness Results}
\label{sec.hardness}

We show that several natural problems for the Settings discussed above are NP-Hard. The first three problems involve different variations of Setting A2. Problem $4$ asks if there is a way to re-allocate students who were matched in round $\cR_1$ so as to increase the number of students matched in round $\cR_2$. For Problem $5$ we deviate from the two round setting and go back to the single round setting over schools and students with an added weight function defined over edges between students and schools. Problem $5$ asks to define a capacity vector over schools and a corresponding stable matching that maximizes the total weight. We define all problems formally below.

\begin{problem}
     A set of new students $N$ arrive in round $\cR_2$. Let $L$ be the set of students in round $\cR_1$ who are unmatched in $M$. The City wants to extend original matching M in a stability-preserving manner so that it maximizes the number of students who get matched from $L$, and subject to this, minimize the number of students who get matched from $N$. ($max_L min_N$)
   
\end{problem}    

\begin{problem}
    Same setting as Problem 1, but the City wants to maximize the number of students who get matched from $N$, and subject to this, minimize the number of students who get matched from $L$. ($max_N min_L$)

\end{problem}

\begin{problem}
    A set of new students $N$ arrive in round $\cR_2$. The City wants to extend the matching in a stability-preserving manner to include $k$ students from $N$, such that it maximizes the number of students matched from $L$.  It can be assumed without loss of generality that $k$ is large enough to allow for a stable extension. $(k$-$max_L)$
    
\end{problem}

\begin{problem}
    In round $\cR_2$, we are allowed to re-allocate some students matched in round $\cR_1$ in order to match more students from $L$. Find a stability-preserving matching that maximizes the number of students who get matched from $L$, and subject to this, minimize the number of re-allocations made.

\end{problem}
    
\begin{problem}
   In the single round setting, given a set of students, and schools with strictly ordered preference lists $l(s),l(h)$ respectively, and a weight function $w(j)$ over the edges of students to schools, find a vector of capacities for the schools and a stable matching with respect to this vector that maximizes the total weight. 

\end{problem}

\begin{theorem}
Problems 1,2,3,4,5 are NP-hard.
\end{theorem}

\begin{proof}
We show NP-Hardness by reducing all the problems from the Cardinal Set-Cover problem which was shown to be NP-Complete by Karp \cite{karp1975}.
In the Set-Cover problem we are given a Universal Set $U = \{e_{1},...,e_{n} \}$ of elements, and a collection of Subsets, $S_i \subseteq U$. The goal is to find the smallest sub-collection of subsets that cover all elements.\\
The key idea in the reductions is to convert the subsets into schools, and to convert elements into students who want to go to the schools that they are elements of.

\begin{enumerate}
	\item For every set $S_i = \{ e_{i1},.., e_{ik} \}$ we construct a corresponding school $h_i$. The preference list for each $h_i$ is $(s_i', n_i, e_{i1},...., e_{ik})$, where $n_i \in N$ and $s_i'$ are students who only want to go to $h_i$. The order of $e_{ij}$ on $h_i's$ preference list does not matter, and the preference lists of $e_j$ can be arbitrary over the sets they are members of. We set the capacities of the schools to be 1 in round $\cR_1$, and all $s_i'$ are matched to $h_i$. In round $\cR_2$ the $n_i$ arrive. $max_L min_N$ will match all $e_j$, and the fewest possible $n_i$. By construction, if a student $e_j$ is admitted to $h_i$ then $n_i$ must also be admitted to $h_i$, otherwise $(n_i, h_i)$ will form a blocking pair. Therefore the admitted $n_i$ correspond to an optimal set cover.
	
	\item The reduction for $MAX_N MIN_L$ is symmetric to $MAX_L MIN_N$. The difference is that the gadgets $n_i$ are now students in $L$, and $e_i$ are now students in $N$.

	\item The reduction is the same as Problem 1. A solution to $k-max_L$ corresponds to the decision version of set cover (i.e. is there a cover of size $k$?). If all $L$ are matched then it is a yes-instance, otherwise it is a no-instance.
	
	\item We modify our reduction from Problem 1, by including a new school $h_0$ whose preference list is $(n_1,n_2,...)$. Each $n_i$ prefers $h_i$ to $h_0$. In round $\cR_1$, we set the capacity of $h_0$ to $n$ and the capacity of all the other schools to $1$. In round $\cR_1$, the $n_i$ are admitted to $h_0$. In round $\cR_2$, all $L$ will be admitted. If a student $e_j$ is matched to $h_i$ then $n_i$ must be reallocated to $h_i$, otherwise $(n_i, h_i)$ will form a blocking pair.
	An optimal solution will minimize the number of $n_i$ no longer matched to $h_0$; these $n_i$ correspond to an optimal set-cover.

	\item 
	We modify our reduction from Problem 4, by the inclusion of weights on the edges in the following way: $w(h_0, n_i)=1, w(h_j, n_i) = 0, w(h_j, e_{ij}) = 2$. If an $e_j$ is matched to $h_i$ then $n_i$ must also be matched to $h_i$. 
	An uncapacitated max-weight stable matching will match all the $e_j$, while minimizing the number of $(h_j, n_i)$ that are matched.

\end{enumerate}
\end{proof}

	\section{Discussion}
\label{sec.discussion}

The main remaining issue is obtaining mechanisms for Settings A1, A2 and B1 which are incentive compatible. One may also consider making the Settings more strict by allowing only complete lists: does this allow the design of incentive compatible mechanisms?

	\section{Acknowledgements}
\label{sec.ack}

We wish to thank Laura Doval, Federico Echenique, Nicole Immorlica and Thorben Trobst for valuable discussions and pointers into the literature.

	\bibliographystyle{alpha}
	\bibliography{refs}
\end{document}